\newtheorem{definition}{Definition}
\newtheorem{proposition}{Proposition}
\newtheorem{corollary}{Corollary}
\newtheorem{theorem}{Theorem}
\begin{document}
 \title{Generalized symmetry superalgebras}

\author{\"{O}zg\"{u}r A\c{c}{\i}k}
\email{ozacik@science.ankara.edu.tr}
\address{Department of Physics,
Ankara University, Faculty of Sciences, 06100, Tando\u gan-Ankara,
Turkey\\}

\author{\"Umit Ertem}
 \email{umitertemm@gmail.com}
\address{Astronomer, Diyanet \.{I}\c{s}leri Ba\c{s}kanl{\i}\u{g}{\i}, \"{U}niversiteler Mah.\\
 Dumlup{\i}nar Bul. No:147/H 06800 \c{C}ankaya, Ankara, Turkey\\}

\date{\today}

\begin{abstract}

We generalize the symmetry superalgebras of isometries and geometric Killing spinors on a manifold to include all the hidden symmetries of the manifold generated by Killing spinors in all dimensions. We show that bilinears of geometric Killing spinors produce special Killing-Yano and special conformal Killing-Yano forms. After defining the Lie algebra structure of hidden symmetries generated by Killing spinors, we construct the symmetry operators as the generalizations of the Lie derivative on spinor fields. All these constructions together constitute the structure of generalized symmetry superalgebras. We exemplify the construction on weak $G_2$ and nearly K\"{a}hler manifolds.

\end{abstract}

\keywords{Killing spinors, Killing-Yano forms, symmetry superalgebras}

\maketitle

\section{Introduction}

Killing vector fields generate the isometries of a manifold and they constitute a Lie algebra structure called the symmetry algebra of the manifold under the Lie bracket of vector fields. A manifold admitting a spin structure is called a spin manifold and the isometries of a spin manifold are related to the special types of spinors called geometric Killing spinors \cite{Lichnerowicz1,Alekseevsky Cortes1,Acik}. The squaring map of geometric Killing spinors correspond to Killing vector fields and they together form a superalgebra structure called the symmetry superalgebra on the manifold \cite{Klinker}. The even part of the superalgebra is the symmetry algebra of Killing vector fields and the odd part is the space of geometric Killing spinors \cite{OFarrill}. The brackets of the superalgebra correspond to the Lie bracket of vector fields for the even-even case, Lie derivative of spinor fields with respect to Killing vectors for the even-odd case and the squaring map of spinors for the odd-odd case. If all the brackets satisfy the Jacobi identities, then the symmetry superalgebra correspond to a Lie superalgebra. Besides the geometric Killing spinors, supergravity Killing spinors which are supersymmetry generators of bosonic supergravity theories can also be used in the construction of symmetry superalgebras called Killing superalgebras on supergravity backgrounds that are solutions of the supergravity field equations in various dimensions \cite{OFarrill HackettJones Moutsopoulos Simon,OFarrill Santi,deMedeiros OFarrill Santi}. These Killing superalgebras are important tools in the classification problem of supergravity backgrounds in all dimensions \cite{OFarrill Meessen Philip,OFarrill HackettJones Moutsopoulos,OFarrill Hustler1,OFarrill Hustler2}. They reduce to the symmetry superalgebras generated by geometric Killing spinors in constant curvature backgrounds since in that case supergravity Killing spinors reduce to geometric Killing spinors. In a flat background, the symmetry superalgebra is called the Poincare superalgebra and its subalgebra deformations can give way to find the Killing superalgebras of supergravity backgrounds \cite{OFarrill Santi1,OFarrill Santi2}.

Antisymmetric generalizations of isometries to higher degree differential forms are Killing-Yano (KY) forms which are called hidden symmetries of the manifold. They are related to the $p$-form Dirac currents of geometric Killing spinors which are generalizations of the squaring map of spinors to the higher degree bilinears \cite{Acik Ertem1}. The Poincare superalgebra can be extended to include these hidden symmetries in a consistent superalgebra structure \cite{Alekseevsky Cortes Devchand Proeyen,Alekseevsky Cortes2}. This comes from the fact that KY forms constitute a graded Lie algebra structure under the Schouten-Nijenhuis (SN) bracket of differential forms on constant curvature manifolds \cite{Kastor Ray Traschen}. Moreover, on constant curvature manifolds, the symmetry superalgebras can also be extended to include odd degree KY forms \cite{Ertem1,Ertem2}. The even part of the extended superalgebras correspond to the Lie algebra of odd degree KY forms under the SN bracket and the odd part is the space of geometric Killing spinors. The brackets of the superalgebra are the SN bracket for the even-even part, the symmetry operators of geometric Killing spinors which are generalizations of the Lie derivative on spinor fields on constant curvature manifolds for the even-odd part and the $p$-form Dirac currents of spinors for the odd-odd part. These extended superalgebras does not correspond to Lie superalgebras in general. However, these extensions cannot exhaust all the hidden symmetries constructed out of geometric Killing spinors and all backgrounds they can be defined.

In this paper, we generalize all the above symmetry superalgebras and their extensions to include all the hidden symmetries generated by geometric Killing spinors on manifolds they can be defined which exhaust all the possibilities. In that way, the algebraic structure of the hidden symmetries and geometric Killing spinors can be constructed for a given manifold admitting not only isometries but also the KY form generalizations of them. We show that the $p$-form Dirac currents of geometric Killing spinors correspond to special KY forms and special closed conformal KY (CCKY) forms which satisfy some special integrability conditions. The even or odd degree character of these forms are dependent on the spinor inner product defined on the manifold and the real or imaginary character of the geometric Killing spinor. We also prove that the CKY bracket defined in \cite{Ertem3} is the natural Lie algebra bracket for the special KY and special CCKY forms. We construct the symmetry operators of geometric Killing spinors which are relevant on all manifolds by using special KY and special CCKY forms. We finally show that all of these constructions together form superalgebra structures which correspond to the generalizations of symmetry superalgebras for all manifolds admitting geometric Killing spinors although they are not Lie superalgebras in general. The superalgebra structure is defined for all dimensions in the case that the even part of the superalgebra consists of special odd KY forms and special even CCKY forms. However, if the even part of the superalgebra is the Lie algebra of special even KY forms and special odd CCKY forms, then the generalized symmetry superalgebra is defined only in even dimensions. The brackets of the superalgebra structures in two different cases are slightly different from each other. We also demonstrate the structure of these generalized symmetry superalgebras on the examples of weak $G_2$ and nearly K\"{a}hler manifolds.

The paper is organized as follows. In Section 2, basic definitions and formulas are given and in Section 3, it is proved that the $p$-form Dirac currents of geometric Killing spinors correspond to special KY and special CCKY forms. The construction of the Lie algebra structure for the special KY and special CCKY forms is given in Section 4. The symmetry operators of geometric Killing spinors constructed out of special KY and special CCKY forms are the subject of Section 5. Section 6, includes the theorems proving the generalizations of symmetry superalgebras including all the hidden symmetries of a manifold. The examples of generalized symmetry superalgebras for weak $G_2$ and nearly K\"{a}hler manifolds are given in Section 7. Section 8 concludes the paper.

\section{Basic Definitions and Formulas}

We consider an $n$-dimensional spin manifold $M$, on which one can define a spin structure. Besides the exterior bundle $\Lambda M$, we can also construct Clifford bundle $Cl(M)$ and spinor bundle $\Sigma M$ on $M$. The sections of $Cl(M)$ corresponds to inhomogeneous differential forms and the sections of $\Sigma M$ are spinor fields.

\begin{definition}
For $\psi, \phi \in\Sigma M$, we can define the spin-invariant inner product $(\,,\,)$ with the property
\begin{equation}
(\psi,\phi)=\pm(\phi,\psi)^j
\end{equation}
where $j$ is an involution in the relevant Clifford algebra and the inner product takes values in the division algebra $\mathbb{D}$. So, $j$ can be identity (Id), complex conjugation ($^*$), quaternionic conjugation ($\bar{\,\,}$) or quaternionic reversion ($\widehat{\,\,}$) depending on the Clifford algebra corresponding to the matrix algebras over $\mathbb{D}=\mathbb{R}$, $\mathbb{C}$ or $\mathbb{H}$ in relevant dimensions \cite{Benn Tucker}. For $\psi,\phi\in\Sigma M$, $\alpha\in Cl(M)$ and $c\in\mathbb{D}$, the inner product has the following properties
\begin{eqnarray}
(\psi,\alpha.\phi)&=&(\alpha^{\mathcal{J}}.\psi,\phi)\\
(c\psi,\phi)&=&c^j(\psi,\phi)
\end{eqnarray}
where $.$ denotes the Clifford product and ${\mathcal{J}}$ is an involution on $Cl(M)$.
\end{definition}

\begin{definition}
We can define two different involutions on $Cl(M)$. The main automorphism $\eta:Cl(M)\rightarrow Cl(M)$ gives the $\mathbb{Z}_2$ grading to $Cl(M)=Cl^0(M)\oplus Cl^1(M)$ and has the property $\eta(\alpha.\beta)=\eta\alpha.\eta\beta$ for $\alpha,\beta\in Cl(M)$. For $\omega\in Cl(M)$, if $\eta\omega=\omega$, then $\omega\in Cl^0(M)$ and if $\eta\omega=-\omega$, then $\omega\in Cl^1(M)$. If, $\omega$ is a homogeneous $p$-form, then the action of $\eta$ on $\omega$ is defined by $\eta\omega=(-1)^p\omega$.

The involution $\xi:Cl(M)\rightarrow Cl(M)$ has the property $\xi(\alpha.\beta)=\xi\beta.\xi\alpha$ for $\alpha,\beta\in Cl(M)$. For a homogeneous $p$-form $\omega$, the action of $\xi$ is defined by $\xi\omega=(-1)^{\lfloor p/2 \rfloor}\omega$ where $\lfloor\rfloor$ denotes the floor function which takes the integer part of the argument. So, ${\mathcal{J}}$ in (2) can be $\xi$, $\xi\eta$, $\xi^*$ or $\xi\eta^*$ depending on the relevant Clifford algebra with $^*$ denoting the complex conjugation.
\end{definition}

The definition of the spinor inner product gives rise to the definition of the space of dual spinor fields $\Sigma^*M$. For a dual spinor $\overline{\psi}\in\Sigma^*M$, the action of it on a spinor $\phi\in\Sigma M$ is defined as \cite{Charlton}
\begin{equation}
\overline{\psi}(\phi)=(\psi,\phi).
\end{equation}
If we consider the tensor product of $\Sigma M\otimes\Sigma^*M$, its action on $\Sigma M$ is given by
\begin{equation}
(\psi\otimes\overline{\phi})\kappa=(\phi,\kappa)\psi
\end{equation}
for $\psi,\phi,\kappa\in\Sigma M$ and $\overline{\phi}\in\Sigma^*M$. This means that the elements of $\Sigma M\otimes\Sigma^*M$ correspond to the linear transformations of $\Sigma M$. Since $Cl(M)$ also acts on $\Sigma M$ via Clifford product corresponding to the linear transformations on $\Sigma M$, the tensor product $\Sigma M\otimes\Sigma^*M$ is isomorphic to $Cl(M)$. Then, the elements $\psi\otimes\overline{\phi}\in\Sigma M\otimes\Sigma^*M$ can be written as a sum of different degree differential forms corresponding to the elements of $Cl(M)$. For any orthonormal co-frame basis $\{e^a\}$, the Fierz identity which is the decomposition of the tensor product of spinors and dual spinors in terms of different degree differential forms is written as
\begin{eqnarray}
\psi\overline{\phi}&=&(\phi,\psi)+(\phi,e_a.\psi)e^a+(\phi,e_{ba}.\psi)e^{ab}+...+\nonumber\\
&&+(\phi,e_{a_p...a_2a_1}.\psi)e^{a_1a_2...a_p}+...+(-1)^{\lfloor n/2\rfloor}(\phi,z.\psi)z
\end{eqnarray}
where we denote $\psi\overline{\phi}=\psi\otimes\overline{\phi}$, $e^{a_1a_2...a_p}=e^{a_1}\wedge e^{a_2}\wedge...\wedge e^{a_p}$ and $z$ is the volume form.

\begin{definition}
For any $\psi\in\Sigma M$, the vector field $V_{\psi}$ which is the metric dual of the 1-form projection of $\psi\overline{\psi}$ that is
\begin{equation}
\widetilde{V}_{\psi}=(\psi\overline{\psi})_1=(\psi,e_a.\psi)e^a
\end{equation}
is called as the Dirac current of $\psi$ where $\,\widetilde{\,\, }$ denotes the metric dual.
\end{definition}
As a generalization of this definition, we also have
\begin{definition}
For any $\psi\in\Sigma M$, the $p$-form projection of $\psi\overline{\psi}$ that is
\begin{equation}
(\psi\overline{\psi})_p=(\psi,e_{a_p...a_2a_1}.\psi)e^{a_1a_2...a_p}
\end{equation}
is called as the $p$-form Dirac current of $\psi$.
\end{definition}

Two first-order differential operators can be defined on $\Sigma M$ via the Levi-Civita connection $\nabla$ induced on $\Sigma M$. The first one is the Dirac operator which is defined as
\begin{equation}
\displaystyle{\not}D=e^a.\nabla_{X_a}
\end{equation}
for the frame basis $\{X_a\}$ and the co-frame basis $\{e^a\}$. The second one is the twistor operator which is written as
\begin{equation}
\nabla_{X_a}-\frac{1}{n}e_a.\displaystyle{\not}D
\end{equation}
in $n$ dimensions. The spinors that are in kernel of the Dirac operator are called harmonic spinors and that are in the kernel of the twistor operator are called twistor spinors \cite{Lichnerowicz2,Habermann,Baum Leitner,Baum Friedrich Grunewald Kath,Bourguignon et al}.

\begin{definition}
A spinor field $\psi\in\Sigma M$, which is an eigenspinor of the Dirac operator $\displaystyle{\not}D\psi=m\psi$ and also a twistor spinor at the same time, satisfies the following Killing spinor equation
\begin{equation}
\nabla_{X_a}\psi=\lambda e_a.\psi
\end{equation}
and is called a geometric Killing spinor with the Killing number $\lambda:=\frac{m}{n}$. The Killing number $\lambda$ is a real or pure imaginary number.
\end{definition}
We call geometric Killing spinors as Killing spinors for simplicity. The existence of Killing spinors on a manifold $M$ constrains the curvature characteristics of the manifold. Integrability condition of the Killing spinor equation (11) is written as
\begin{equation}
R_{ab}.\psi=-4\lambda^2(e_a\wedge e_b).\psi
\end{equation}
where $R_{ab}$ are curvature 2-forms. Moreover, this implies
\begin{equation}
{\mathcal{R}}=-4\lambda^2n(n-1)
\end{equation}
where ${\mathcal{R}}$ is the scalar curvature.

The Dirac current $V_{\psi}$ of a Killing spinor $\psi$ corresponds to a Killing vector field, namely its metric dual satisfies the following equality
\begin{equation}
\nabla_X\widetilde{V}_{\psi}=\frac{1}{2}i_Xd\widetilde{V}_{\psi}
\end{equation}
for any vector field $X$ where $d$ is the exterior derivative operator and $i_X$ is the interior derivative (contraction) operator with respect to $X$.

Killing vector fields have antisymmetric generalizations to higher-degree differential forms.
\begin{definition}
If a $p$-form $\omega$ satisfies
\begin{equation}
\nabla_X\omega=\frac{1}{p+1}i_Xd\omega
\end{equation}
for any vector field $X$, then $\omega$ is called as a Killing-Yano (KY) $p$-form.
\end{definition}
KY forms are special cases of conformal Killing-Yano (CKY) forms.
\begin{definition}
If a $p$-form $\omega$ satisfies
\begin{equation}
\nabla_X\omega=\frac{1}{p+1}i_Xd\omega-\frac{1}{n-p+1}\widetilde{X}\wedge\delta\omega
\end{equation}
for any vector field $X$, its metric dual $\widetilde{X}$ and the co-derivative operator $\delta$, then $\omega$ is called as a CKY $p$-form.
\end{definition}
As can be seen from the last definition that the co-closed CKY forms satisfying $\delta\omega=0$ correspond to KY forms. On the other hand, another subset of CKY forms consists of closed CKY (CCKY) forms satisfying $d\omega=0$, namely the following equation
\begin{equation}
\nabla_X\omega=-\frac{1}{n-p+1}\widetilde{X}\wedge\delta\omega.
\end{equation}
CKY equation (16) has Hodge duality invariance, namely if $\omega$ is a CKY $p$-form then $*\omega$ is also a CKY $(n-p)$-form with $*$ denoting the Hodge map \cite{Semmelmann}. One can show that this property leads to the fact that for a KY $p$-form $\omega$, its Hodge dual $*\omega$ is a CCKY $(n-p)$-form and conversely for a CCKY $p$-form $\omega$, its Hodge dual $*\omega$ is a KY $(n-p)$-form. So, KY $p$-forms and CCKY $(n-p)$-forms are Hodge duals of each other \cite{Ertem4}.

The integrability condition of the KY equation (15) can be found as follows \cite{Semmelmann,Acik Ertem Onder Vercin}
\begin{equation}
\nabla_{X_a}d\omega=-\frac{p+1}{p}R_{ab}\wedge i_{X^b}\omega
\end{equation}
and similarly for the CCKY equation (17), the integrability condition is \cite{Ertem3}
\begin{equation} 
\nabla_{X_a}\delta\omega=-\frac{n-p+1}{n-p}i_{X^c}\left(i_{X_b}R_{ca}\wedge i_{X^b}\omega\right).
\end{equation}
We can define subsets of KY and CCKY forms which satisfy special types of integrability conditions.
\begin{definition}
A KY $p$-form $\omega$ is called a special KY $p$-form, if it satisfies the following condition
\begin{equation}
\nabla_X d\omega=-c(p+1)\widetilde{X}\wedge\omega
\end{equation}
where $c$ is a constant. Similarly, a CCKY $p$-form is called a special CCKY $p$-form if it satisfies
\begin{equation}
\nabla_{X}\delta\omega=c(n-p+1)i_X\omega
\end{equation}
where $c$ is a constant.
\end{definition}
Note that, in constant curvature manifolds with curvature 2-forms $R_{ab}=ce_a\wedge e_b$ for a constant $c$, the integrability conditions (18) and (19) reduces to (20) and (21), respectively. This means that in constant curvature manifolds, all KY forms are special KY forms and all CCKY forms are special CCKY forms. Obviously, this is not true in general.

\section{Bilinears of Killing Spinors}

Besides the relation between Dirac currents of Killing spinors and Killing vectors, $p$-form Dirac currents of Killing spinors are also related to KY and CCKY forms. For a Killing spinor $\psi$, its $p$-form Dirac current $(\psi\overline{\psi})_p$ is a KY $p$-form or a CCKY $p$-form depending on the chosen involution of the inner product, on the real or imaginary character of $\lambda$ and on the parity (evenness or oddness) of $p$ \cite{Acik Ertem1}. Moreover, we have the following
\begin{proposition}
For two Killing spinors $\psi$ and $\phi$ satisfying (11), the symmetric combination of $p$-form bilinears $(\psi\overline{\phi})_p+(\phi\overline{\psi})_p$ corresponds to a KY $p$-form or a CCKY $p$-form depending on the chosen involution of the inner product, on the real or imaginary character of $\lambda$ and on the parity (evenness or oddness) of $p$.
\end{proposition}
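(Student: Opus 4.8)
The plan is to reduce the claim to the single-spinor statement quoted immediately above the proposition (from \cite{Acik Ertem1}) via polarization. Because the Killing spinor equation (11) is linear in the spinor for a fixed Killing number $\lambda$, if $\psi$ and $\phi$ both satisfy (11) with the same $\lambda$ then $\psi+\phi$ is again a Killing spinor with Killing number $\lambda$. First I would record this and observe that, as a consequence, the $p$-form Dirac currents $(\psi\overline\psi)_p$, $(\phi\overline\phi)_p$ and $\big((\psi+\phi)\overline{(\psi+\phi)}\big)_p$ are all of the \emph{same} type --- all KY $p$-forms or all CCKY $p$-forms --- since by the single-spinor result the type is fixed only by the involution $\mathcal J$ of the inner product, the real or imaginary character of $\lambda$, and the parity of $p$, all of which are common to the three spinors.

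Next I would use that the spinor inner product is additive in each of its two arguments --- the involutions in its defining properties (2) and (3) twist only scalar multiplication, not addition --- so that $\overline{\psi+\phi}=\overline\psi+\overline\phi$ and therefore
\[
(\psi+\phi)\overline{(\psi+\phi)}=\psi\overline\psi+\phi\overline\phi+\psi\overline\phi+\phi\overline\psi .
\]
Taking the degree-$p$ part through the Fierz decomposition (6) and rearranging gives
\[
(\psi\overline\phi)_p+(\phi\overline\psi)_p=\big((\psi+\phi)\overline{(\psi+\phi)}\big)_p-(\psi\overline\psi)_p-(\phi\overline\phi)_p .
\]
The right-hand side is a real linear combination of three $p$-form Dirac currents, each a KY (respectively CCKY) $p$-form by the previous step; since the KY equation (15) and the CCKY equation (17) are linear in $\omega$, KY $p$-forms and CCKY $p$-forms each form a vector space, so the left-hand side is again a KY (respectively CCKY) $p$-form. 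This would establish the proposition.

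The step I expect to be the main obstacle --- more a matter of care than of real difficulty --- is confirming that the three Dirac currents entering the identity genuinely lie in the same class, KY or CCKY: the cited single-spinor theorem only pins down the class in terms of $\mathcal J$, the character of $\lambda$, and the parity of $p$, so one must check that none of these change under $\psi\mapsto\psi+\phi$, which holds precisely because $\psi+\phi$ solves the same equation (11) with the same $\lambda$. As an alternative that avoids polarization, one could instead differentiate $\omega=(\psi\overline\phi)_p+(\phi\overline\psi)_p$ directly, substitute $\nabla_{X_a}\psi=\lambda e_a.\psi$ and $\nabla_{X_a}\phi=\lambda e_a.\phi$ into its Fierz coefficients, and repeat the computation of \cite{Acik Ertem1} with $\psi\overline\psi$ replaced by the symmetrized bilinear; this route is self-contained but considerably longer, so I would keep it only as a cross-check.
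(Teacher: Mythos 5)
Your polarization argument is correct, and it is a genuinely different route from the paper's. You take the single-spinor statement quoted just before the proposition (from the cited earlier work) as given, note that $\psi+\phi$ solves (11) with the same Killing number, use additivity of the inner product and of the dual-spinor map to get $(\psi\overline\phi)_p+(\phi\overline\psi)_p=\bigl((\psi+\phi)\overline{(\psi+\phi)}\bigr)_p-(\psi\overline\psi)_p-(\phi\overline\phi)_p$, and conclude by linearity of the KY equation (15) and the CCKY equation (17), since the class (KY versus CCKY) is fixed by $\mathcal J$, $j$, the real or imaginary character of $\lambda$ and the parity of $p$, all common to the three Dirac currents; your flagged obstacle is exactly the right one and is resolved as you say. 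The paper instead proves the proposition by a direct computation: it differentiates the symmetric bilinear using compatibility of $\nabla$ with the inner product, substitutes (11), converts Clifford products via (18)--(19), and extracts $\nabla_{X_a}$, $d$ and $\delta$ of the bilinear (equations (20)--(22)), then splits into two parameter tables yielding the Case~1 and Case~2 systems (23) and (25). What the paper's longer route buys is precisely those quantitative relations linking the degree-$p$ bilinear to the degree-$(p\pm1)$ bilinears (e.g.\ $d[(\psi\overline\phi)_p+(\phi\overline\psi)_p]=2\lambda(p+1)[(\psi\overline\phi)_{p+1}+(\phi\overline\psi)_{p+1}]$), which are used immediately afterwards to prove specialty with $c=-4\lambda^2$ (Proposition 2) and again in the examples; your argument delivers only the KY/CCKY dichotomy unless you also polarize the corresponding single-spinor identities, while being shorter and essentially self-evident modulo the cited result. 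What your approach buys is economy and independence from the Fierz-coefficient bookkeeping; what it costs is self-containedness (the paper's proof does not invoke the earlier reference) and the intermediate equations needed downstream.
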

\begin{proof}
From the compatibility of the Levi-Civita connection $\nabla$ with the spinor inner product and the spinor duality operation, we can write the covariant derivative of $(\psi\overline{\phi})_p+(\phi\overline{\psi})_p$ with respect to any frame basis vector $X_a$ as
\begin{eqnarray}
\nabla_{X_a}\left[(\psi\overline{\phi})_p+(\phi\overline{\psi})_p\right]&=&\left((\nabla_{X_a}\psi)\overline{\phi}\right)_p+\left(\psi(\overline{\nabla_{X_a}\phi})\right)_p\nonumber\\
&&+\left((\nabla_{X_a}\phi)\overline{\psi}\right)_p+\left(\phi(\overline{\nabla_{X_a}\psi})\right)_p\nonumber\\
&=&\left(\lambda e_a.\psi\overline{\phi}\right)_p+\left(\psi(\overline{\lambda e_a.\phi})\right)_p\nonumber\\
&&+\left(\lambda e_a.\phi\overline{\psi}\right)_p+\left(\phi(\overline{\lambda e_a.\psi})\right)_p\nonumber
\end{eqnarray}
where we have used (11) in the second line. From the properties (5), (3) and (2), we can write the equality $\psi(\overline{\lambda e_a.\phi})=\lambda^j(\psi\overline{\phi}).e_a^{\mathcal{J}}$ and similarly $\phi(\overline{\lambda e_a.\psi})=\lambda^j(\phi\overline{\psi}).e_a^{\mathcal{J}}$. So, we have
\begin{eqnarray}
\nabla_{X_a}\left[(\psi\overline{\phi})_p+(\phi\overline{\psi})_p\right]&=&\left(\lambda e_a.\psi\overline{\phi}\right)_p+\left(\lambda^j(\psi\overline{\phi}).e_a^{\mathcal{J}}\right)_p\nonumber\\
&&+\left(\lambda e_a.\phi\overline{\psi}\right)_p+\left(\lambda^j(\phi\overline{\psi}).e_a^{\mathcal{J}}\right)_p.\nonumber
\end{eqnarray}
The Clifford product can be expressed in terms of wedge product and interior derivative. For a 1-form $x$ and an arbitrary form $\alpha$, we have the following identities
\begin{eqnarray}
x.\alpha&=&x\wedge\alpha+i_{\widetilde{x}}\alpha\\
\alpha.x&=&x\wedge\eta\alpha-i_{\widetilde{x}}\eta\alpha
\end{eqnarray}
where $\widetilde{x}$ denotes the vector field metric dual to $x$. By using these equalities, we obtain
\begin{eqnarray}
\nabla_{X_a}\left[(\psi\overline{\phi})_p+(\phi\overline{\psi})_p\right]&=&\lambda e_a\wedge\left(\psi\overline{\phi}\right)_{p-1}+\lambda i_{X_a}\left(\psi\overline{\phi}\right)_{p+1}\nonumber\\
&&+\lambda^je_a^{\mathcal{J}}\wedge\left(\psi\overline{\phi}\right)_{p+1}^{\eta}-\lambda^j i_{\widetilde{e_a^{\mathcal{J}}}}\left(\psi\overline{\phi}\right)_{p+1}^{\eta}\nonumber\\
&&+\lambda e_a\wedge\left(\phi\overline{\psi}\right)_{p-1}+\lambda i_{X_a}\left(\phi\overline{\psi}\right)_{p+1}\nonumber\\
&&+\lambda^je_a^{\mathcal{J}}\wedge\left(\phi\overline{\psi}\right)_{p+1}^{\eta}-\lambda^j i_{\widetilde{e_a^{\mathcal{J}}}}\left(\phi\overline{\psi}\right)_{p+1}^{\eta}
\end{eqnarray}
where we denote $\eta\left(\phi\overline{\psi}\right)_{p+1}=\left(\phi\overline{\psi}\right)_{p+1}^{\eta}$. For the torsion-free Levi-Civita connection $\nabla$, we have the identity $d=e^a\wedge\nabla_{X_a}$ and taking the wedge product of (24) with $e^a$ from the left gives
\begin{eqnarray}
d\left[(\psi\overline{\phi})_p+(\phi\overline{\psi})_p\right]&=&\lambda (p+1)\left(\psi\overline{\phi}\right)_{p+1}-\lambda^j\textrm{sgn}(e_a^{\mathcal{J}})(p+1)\left(\psi\overline{\phi}\right)_{p+1}^{\eta}\nonumber\\
&&+\lambda (p+1)\left(\phi\overline{\psi}\right)_{p+1}-\lambda^j\textrm{sgn}(e_a^{\mathcal{J}})(p+1)\left(\phi\overline{\psi}\right)_{p+1}^{\eta}\nonumber\\
\end{eqnarray}
where we have used $e^a\wedge e_a=0$ and $e^a\wedge i_{X_a}\alpha=p\alpha$ for a $p$-form $\alpha$. $\textrm{sgn}(e_a^{\mathcal{J}})$ is equal to $\pm1$ depending on the involution ${\mathcal{J}}$. Similarly, we have the identity $\delta=-i_{X^a}\nabla_{X_a}$ for the Levi-Civita connection $\nabla$ and by taking the interior derivative of (24) with respect to $i_{X^a}$, we obtain
\begin{eqnarray}
\delta\left[(\psi\overline{\phi})_p+(\phi\overline{\psi})_p\right]&=&-\lambda(n-p+1)\left(\psi\overline{\phi}\right)_{p-1}^{\eta}\nonumber\\
&&-\lambda^j\textrm{sgn}(e_a^{\mathcal{J}})(n-p+1)\left(\psi\overline{\phi}\right)_{p-1}^{\eta}\nonumber\\
&&-\lambda(n-p+1)\left(\phi\overline{\psi}\right)_{p-1}^{\eta}\nonumber\\
&&-\lambda^j\textrm{sgn}(e_a^{\mathcal{J}})(n-p+1)\left(\phi\overline{\psi}\right)_{p-1}^{\eta}\nonumber\\
\end{eqnarray}
where we have used $i_{X^a}e_a=n$ and  $e^a\wedge i_{X_a}\alpha=p\alpha$ for a $p$-form $\alpha$. In equalities (24), (25) and (26), we have four parameters to choose; $\lambda$ can be real (Re) or pure imaginary (Im), $j$ can be Id, $^*$, $\bar{\,\,}$ or $\widehat{\,\,}$, ${\mathcal{J}}$ can be $\xi$, $\xi^*$, $\xi\eta$ or $\xi\eta^*$ and $p$ can be even or odd. If $\lambda$ is real, we have $\lambda^{\textrm{Id}}=\lambda^*=\bar{\lambda}=\widehat{\lambda}=\lambda$ and if $\lambda$ is pure imaginary, we have $\lambda^{\textrm{Id}}=\widehat{\lambda}=\lambda$ and $\lambda^*=\bar{\lambda}=-\lambda$. Since $e_a$ is a real 1-form, we have $e_a^{\xi}=e_a^{\xi^*}=e_a$ with $\textrm{sgn}(e_a^{\mathcal{J}})=1$ and $e_a^{\xi\eta}=e_a^{\xi\eta^*}=-e_a$ with $\textrm{sgn}(e_a^{\mathcal{J}})=-1$. For $p$ even, we have $(\psi\overline{\phi})_{p-1}^{\eta}=-(\psi\overline{\phi})_{p-1}$, $(\phi\overline{\psi})_{p-1}^{\eta}=-(\phi\overline{\psi})_{p-1}$, $(\psi\overline{\phi})_{p+1}^{\eta}=-(\psi\overline{\phi})_{p+1}$, $(\phi\overline{\psi})_{p+1}^{\eta}=-(\phi\overline{\psi})_{p+1}$ and for $p$ odd, we have $(\psi\overline{\phi})_{p-1}^{\eta}=(\psi\overline{\phi})_{p-1}$, $(\phi\overline{\psi})_{p-1}^{\eta}=(\phi\overline{\psi})_{p-1}$, $(\psi\overline{\phi})_{p+1}^{\eta}=(\psi\overline{\phi})_{p+1}$, $(\phi\overline{\psi})_{p+1}^{\eta}=(\phi\overline{\psi})_{p+1}$. So, by considering these possibilities, we can obtain two different sets of equations from (24), (25) and (26). If the parameters ${\mathcal{J}}$, $\lambda$, $j$ and $p$ are chosen as in Table I, then we find the following set of equations from (24), (25) and (26)
\begin{eqnarray}
\nabla_{X_a}\left[(\psi\overline{\phi})_p+(\phi\overline{\psi})_p\right]&=&2\lambda i_{X_a}\left[(\psi\overline{\phi})_{p+1}+(\phi\overline{\psi})_{p+1}\right]\nonumber\\
d\left[(\psi\overline{\phi})_p+(\phi\overline{\psi})_p\right]&=&2\lambda(p+1)\left[(\psi\overline{\phi})_{p+1}+(\phi\overline{\psi})_{p+1}\right]\\
\delta\left[(\psi\overline{\phi})_p+(\phi\overline{\psi})_p\right]&=&0.\nonumber
\end{eqnarray}

\begin{table}[h]
{\centering{
\begin{tabular}{c c c c}

${\mathcal{J}}\quad$ & \quad $\lambda$\quad & \quad $j$\quad & \quad $p$ \\ \hline
$\xi,\xi^*\quad$ & \quad Re \quad & \quad $\textrm{Id}, ^*, \bar{\,\,}, \widehat{\,\,}$ \quad & \quad even \\
$\xi,\xi^*\quad$ & \quad Im \quad & \quad $\textrm{Id}, \widehat{\,\,}$ \quad & \quad even \\
$\xi,\xi^*\quad$ & \quad Im \quad & \quad $^*, \bar{\,\,}$ \quad & \quad odd \\
$\xi\eta,\xi\eta^*\quad$ & \quad Re \quad & \quad $\textrm{Id}, ^*, \bar{\,\,}, \widehat{\,\,}$ \quad & \quad odd \\
$\xi\eta,\xi\eta^*\quad$ & \quad Im \quad & \quad $\textrm{Id}, \widehat{\,\,}$ \quad & \quad odd \\
$\xi\eta,\xi\eta^*\quad$ & \quad Im \quad & \quad $^*, \bar{\,\,}$ \quad & \quad even \\

\end{tabular}}
\quad\\}
\caption{First set of possibilities for the parameters ${\mathcal{J}}$, $\lambda$, $j$ and $p$.}
\end{table}
We call this set of equations as Case 1 and by comparing them with each other, one can easily see that $(\psi\overline{\phi})_p+(\phi\overline{\psi})_p$ satisfies the following equation
\begin{equation}
\nabla_{X_a}\left[(\psi\overline{\phi})_p+(\phi\overline{\psi})_p\right]=\frac{1}{p+1}i_{X_a}d\left[(\psi\overline{\phi})_p+(\phi\overline{\psi})_p\right]
\end{equation}
and hence from (15) it is a KY $p$-form.

For the parameters ${\mathcal{J}}$, $\lambda$, $j$ and $p$, the remaining possibilities are given as in Table II.

\begin{table}[h]
{\centering{
\begin{tabular}{c c c c}

${\mathcal{J}}\quad$ & \quad $\lambda$\quad & \quad $j$\quad & \quad $p$ \\ \hline
$\xi,\xi^*\quad$ & \quad Re \quad & \quad $\textrm{Id}, ^*, \bar{\,\,}, \widehat{\,\,}$ \quad & \quad odd \\
$\xi,\xi^*\quad$ & \quad Im \quad & \quad $\textrm{Id}, \widehat{\,\,}$ \quad & \quad odd \\
$\xi,\xi^*\quad$ & \quad Im \quad & \quad $^*, \bar{\,\,}$ \quad & \quad even \\
$\xi\eta,\xi\eta^*\quad$ & \quad Re \quad & \quad $\textrm{Id}, ^*, \bar{\,\,}, \widehat{\,\,}$ \quad & \quad even \\
$\xi\eta,\xi\eta^*\quad$ & \quad Im \quad & \quad $\textrm{Id}, \widehat{\,\,}$ \quad & \quad even \\
$\xi\eta,\xi\eta^*\quad$ & \quad Im \quad & \quad $^*, \bar{\,\,}$ \quad & \quad odd \\

\end{tabular}}
\quad\\}
\caption{Second set of possibilities for the parameters ${\mathcal{J}}$, $\lambda$, $j$ and $p$.}
\end{table}
Note that the parameter sets in this case differ from the first case only in the parity of $p$. By considering these possibilities, we find the following set of equation from (24), (25) and (26)
\begin{eqnarray}
\nabla_{X_a}\left[(\psi\overline{\phi})_p+(\phi\overline{\psi})_p\right]&=&2\lambda e_a\wedge\left[(\psi\overline{\phi})_{p-1}+(\phi\overline{\psi})_{p-1}\right]\nonumber\\
d\left[(\psi\overline{\phi})_p+(\phi\overline{\psi})_p\right]&=&0\\
\delta\left[(\psi\overline{\phi})_p+(\phi\overline{\psi})_p\right]&=&-2\lambda(n-p+1)\left[(\psi\overline{\phi})_{p-1}+(\phi\overline{\psi})_{p-1}\right].\nonumber
\end{eqnarray}
We call this set of equations as Case 2 and by comparing them with each other, one finds that $(\psi\overline{\phi})_p+(\phi\overline{\psi})_p$ satisfies the following equation
\begin{equation}
\nabla_{X_a}\left[(\psi\overline{\phi})_p+(\phi\overline{\psi})_p\right]=-\frac{1}{n-p+1}e_a\wedge \delta\left[(\psi\overline{\phi})_p+(\phi\overline{\psi})_p\right]
\end{equation}
and hence from (17) it is a CCKY $p$-form. Thus, we show that the $p$-form bilinear $(\psi\overline{\phi})_p+(\phi\overline{\psi})_p$ is a KY $p$-form or a CCKY $p$-form depending on the chosen set of $\mathcal{J}$, $\lambda$, $j$ and $p$. Note that for the chosen set of parameters, if the even degree $p$-form bilinears are KY forms then the odd degree $p$-form bilinears correspond to CCKY forms and conversely if the odd degree $p$-form bilinears are KY forms then the even degree $p$-form bilinears correspond to CCKY forms.
\end{proof}

By using Definition 8, we can refine the Proposition 1 as in the following form.
\begin{proposition}
For two Killing spinors $\psi$ and $\phi$ satisfying (11), the symmetric combination of $p$-form bilinears $(\psi\overline{\phi})_p+(\phi\overline{\psi})_p$ corresponds to a special KY $p$-form or a special CCKY $p$-form depending on the chosen involution of the inner product, on the real or imaginary character of $\lambda$ and on the parity (evenness or oddness) of $p$.
\end{proposition}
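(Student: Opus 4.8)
The plan is to bootstrap from the two first-order systems obtained in the proof of Proposition~1, namely the Case~1 equations (23) and the Case~2 equations (25). Write $\omega:=(\psi\overline{\phi})_p+(\phi\overline{\psi})_p$ and introduce the neighbouring-degree symmetric bilinears $\sigma:=(\psi\overline{\phi})_{p+1}+(\phi\overline{\psi})_{p+1}$ and $\tau:=(\psi\overline{\phi})_{p-1}+(\phi\overline{\psi})_{p-1}$ formed from the same pair of Killing spinors. The key observation is the one recorded at the end of the proof of Proposition~1: for a fixed choice of the involutions $\mathcal{J}$, $j$ and of the reality type of $\lambda$, the two admissible parameter tables differ \emph{only} in the parity of the form degree. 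Hence, if $\omega$ is of the type treated in Case~1 then $\sigma$ and $\tau$, having degree $p\pm 1$, are of the type treated in Case~2, and conversely; differentiating the relations already obtained once more will therefore close the system onto $\omega$.

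First I would treat Case~1, in which $\omega$ is a KY $p$-form and the second line of (23) reads $d\omega=2\lambda(p+1)\,\sigma$. Since $\sigma$ is governed by Case~2, the first line of (25) written in degree $p+1$ gives $\nabla_{X_a}\sigma=2\lambda\,e_a\wedge\omega$. Differentiating the identity $d\omega=2\lambda(p+1)\sigma$ — whose coefficient is a constant, as $\lambda$ and $p$ are fixed — then yields
\[
\nabla_{X_a}d\omega = 2\lambda(p+1)\,\nabla_{X_a}\sigma = 4\lambda^2(p+1)\,e_a\wedge\omega .
\]
Since $e_a=\widetilde{X}_a$ is the metric dual of the frame vector $X_a$, this is exactly condition (29) with $c=-4\lambda^2$, so $\omega$ is a special KY $p$-form.

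Next I would treat Case~2, in which $\omega$ is a CCKY $p$-form and the third line of (25) reads $\delta\omega=-2\lambda(n-p+1)\,\tau$. Now $\tau$ is governed by Case~1, so the first line of (23) written in degree $p-1$ gives $\nabla_{X_a}\tau=2\lambda\,i_{X_a}\omega$. Differentiating $\delta\omega=-2\lambda(n-p+1)\tau$ then gives
\[
\nabla_{X_a}\delta\omega = -2\lambda(n-p+1)\,\nabla_{X_a}\tau = -4\lambda^2(n-p+1)\,i_{X_a}\omega ,
\]
which is condition (30) with the same constant $c=-4\lambda^2$, so $\omega$ is a special CCKY $p$-form. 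The value $c=-4\lambda^2$ is the expected one: it is consistent with (13), and on a constant curvature background the integrability conditions (27) and (28) degenerate precisely into (29) and (30) with this same $c$.

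There is no genuinely hard analytic step here; the whole content is the combinatorial bookkeeping of the parameter tables — verifying that a unit shift of the form degree toggles Case~1 $\leftrightarrow$ Case~2 while leaving $(\mathcal{J},\lambda,j)$ untouched — together with the harmless boundary cases: if $p=n$ in Case~1 then $\sigma=0$, so $d\omega=0$ and $\omega$ is parallel, hence special vacuously, and similarly if $p=0$ in Case~2 then $\tau=0$, so $\delta\omega=0$. The only subtlety deserving a second look is that the coefficients $2\lambda(p+1)$ and $-2\lambda(n-p+1)$ in (23) and (25) are genuine constants, which is immediate, so that $\nabla_{X_a}$ commutes past them.
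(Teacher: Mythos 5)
Your proposal is correct and follows essentially the same route as the paper's own proof: in Case 1 you differentiate $d\omega=2\lambda(p+1)\sigma$ and use that the $(p+1)$-form bilinear obeys the Case 2 equation $\nabla_{X_a}\sigma=2\lambda e_a\wedge\omega$, and in Case 2 you differentiate $\delta\omega=-2\lambda(n-p+1)\tau$ and use the Case 1 equation $\nabla_{X_a}\tau=2\lambda i_{X_a}\omega$, recovering (29) and (30) with $c=-4\lambda^2$ exactly as in the paper. Your added remarks on the parity toggle between the two parameter tables and on the trivial boundary degrees are consistent with, and slightly more explicit than, the paper's presentation.
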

\begin{proof}
If $(\psi\overline{\phi})_p+(\phi\overline{\psi})_p$ satisfies Case 1 in (27), then it is a KY $p$-form. We can calculate the covariant derivative of $d\left[(\psi\overline{\phi})_p+(\phi\overline{\psi})_p\right]$ from (27) as
\begin{equation}
\nabla_{X_a}d\left[(\psi\overline{\phi})_p+(\phi\overline{\psi})_p\right]=2\lambda(p+1)\nabla_{X_a}\left[(\psi\overline{\phi})_{p+1}+(\phi\overline{\psi})_{p+1}\right].
\end{equation}
We know that if $(\psi\overline{\phi})_p+(\phi\overline{\psi})_p$ is a KY $p$-form, then the one higher degree bilinear $(\psi\overline{\phi})_{p+1}+(\phi\overline{\psi})_{p+1}$ satisfies Case 2 and is a CCKY $(p+1)$-form. Thus, from (29), we obtain
\begin{equation}
\nabla_{X_a}d\left[(\psi\overline{\phi})_p+(\phi\overline{\psi})_p\right]=4\lambda^2(p+1)e_a\wedge\left[(\psi\overline{\phi})_p+(\phi\overline{\psi})_p\right].
\end{equation}
This is nothing but the condition of special KY forms in (20) for $c=-4\lambda^2$. So, $(\psi\overline{\phi})_p+(\phi\overline{\psi})_p$ is a special KY $p$-form in this case.

On the other hand, if $(\psi\overline{\phi})_p+(\phi\overline{\psi})_p$ satisfies Case 2 in (29), then it is a CCKY $p$-form. We can calculate the covariant derivative of $\delta\left[(\psi\overline{\phi})_p+(\phi\overline{\psi})_p\right]$ from (29) as
\begin{equation}
\nabla_{X_a}\delta\left[(\psi\overline{\phi})_p+(\phi\overline{\psi})_p\right]=-2\lambda(n-p+1)\nabla_{X_a}\left[(\psi\overline{\phi})_{p-1}+(\phi\overline{\psi})_{p-1}\right].
\end{equation}
We know that if $(\psi\overline{\phi})_p+(\phi\overline{\psi})_p$ is a CCKY $p$-form, then the one lower degree bilinear $(\psi\overline{\phi})_{p-1}+(\phi\overline{\psi})_{p-1}$ satisfies Case 1 and is a KY $(p-1)$-form. Thus, from (27), we obtain
\begin{equation}
\nabla_{X_a}\delta\left[(\psi\overline{\phi})_p+(\phi\overline{\psi})_p\right]=-4\lambda^2(n-p+1)i_{X_a}\left[(\psi\overline{\phi})_p+(\phi\overline{\psi})_p\right].
\end{equation}
This is nothing but the condition of special CCKY forms in (21) for $c=-4\lambda^2$. So, $(\psi\overline{\phi})_p+(\phi\overline{\psi})_p$ is a special CCKY $p$-form in this case.
\end{proof}

Since all KY forms and CCKY forms are special in constant curvature manifolds, we have the following
\begin{corollary}
Killing spinors generate, via the correspondence in Proposition 2, special KY and special CCKY forms and especially all KY and CCKY forms in constant curvature manifolds. In general, non-special KY and CCKY forms cannot be generated by Killing spinors.
\end{corollary}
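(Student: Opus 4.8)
The plan is to read the statement as two complementary claims and dispatch each in turn, leaning entirely on material already in hand. The first claim --- that Killing spinors generate special KY and special CCKY forms --- is essentially a repackaging of Proposition 2. By ``generated by Killing spinors'' one means the linear span of the $p$-form bilinears $(\psi\overline{\phi})_p+(\phi\overline{\psi})_p$ over pairs $\psi,\phi$ of solutions of (11) (the diagonal case $\psi=\phi$ recovering, up to a factor of $2$, the $p$-form Dirac currents of Definition 2). Proposition 2 already asserts that each such bilinear is a special KY $p$-form or a special CCKY $p$-form, with the alternative dictated by the table of parameters $({\cal{J}},\lambda,j,p)$; since the conditions (29) and (30) defining specialness are linear in $\omega$, the property passes to arbitrary finite linear combinations. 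So the first half is immediate.

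For the ``especially all KY and CCKY forms in constant curvature manifolds'' part I would invoke the remark following Definition 6: when the curvature $2$-forms take the form $R_{ab}=c\,e_a\wedge e_b$, the general integrability conditions (27) and (28) collapse term by term into exactly the special conditions (29) and (30). Hence on a constant curvature manifold \emph{every} KY form is automatically a special KY form and \emph{every} CCKY form is automatically a special CCKY form, so the forms produced by the bilinear construction are not confined to a proper subclass there. To upgrade this to ``all of the KY and CCKY forms'' one appeals, for maximally symmetric backgrounds, to the known fact that the space of odd (respectively even) KY forms, and via Hodge duality the corresponding CCKY forms, is spanned by Killing spinor bilinears. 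This is where I expect the only genuine subtlety to lie: the integrability reduction by itself yields specialness but not surjectivity of the squaring map onto the space of KY forms, and that surjectivity is a representation-theoretic (essentially dimension-counting) fact about the relevant spinor modules rather than a formal consequence of what precedes.

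The second claim --- that non-special KY and CCKY forms cannot be generated by Killing spinors --- is just the contrapositive of Proposition 2. If a KY $p$-form $\omega$ could be written as $(\psi\overline{\phi})_p+(\phi\overline{\psi})_p$ for Killing spinors $\psi,\phi$, then Proposition 2 forces $\omega$ to satisfy (29), i.e.\ $\omega$ is special, contradicting non-specialness; the same argument with (30) handles the CCKY case, and linearity again replaces a single bilinear by a finite sum. It then remains only to observe that non-special KY and CCKY forms genuinely exist once one leaves the constant curvature setting --- which is what makes the word ``general'' in the statement meaningful --- so the obstruction is not vacuous: on a manifold whose curvature $2$-forms are not of the form $c\,e_a\wedge e_b$, the right-hand side of (27) need not be proportional to $\widetilde{X}\wedge\omega$, and explicit KY forms realizing this can be exhibited, for instance on the examples treated in Section 6.
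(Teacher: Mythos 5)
Your argument matches the paper's: the corollary is stated there without a separate proof, being read off immediately from Proposition 2 (bilinears of Killing spinors are special, hence non-special KY/CCKY forms cannot arise this way) together with the remark after Definition 6 that on constant curvature manifolds every KY and CCKY form is special. The surjectivity subtlety you flag --- that on a constant curvature background every KY form is actually realized by Killing spinor bilinears --- is likewise left implicit in the paper, so your treatment is, if anything, more explicit about the one point that is not a formal consequence of Proposition 2.
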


\section{Lie Algebras of Special KY and Special CCKY forms}

Killing vector fields and conformal Killing vector fields, which correspond to the metric duals of KY 1-forms and CKY 1-forms respectively, satisfy Lie algebra structures with respect to the Lie bracket of vector fields that is written for any two vector fields $X$ and $Y$ as
\begin{equation}
[X,Y]=\nabla_XY-\nabla_YX
\end{equation}
This bracket can be generalized to the Schouten-Nijenhuis (SN) bracket defined for the higher-degree differential forms. For any $p$-form $\alpha$ and a $q$-form $\beta$, the SN bracket is written as follows
\begin{equation}
[\alpha,\beta]_{SN}=i_{X^a}\alpha\wedge\nabla_{X_a}\beta+(-1)^{pq}i_{X^a}\beta\wedge\nabla_{X_a}\alpha
\end{equation}
which gives a $(p+q-1)$-form and reduces to the Lie bracket of vector fields (35) for $p=q=1$. It satisfies the following graded Lie bracket properties
\begin{eqnarray}
[\alpha,\beta]_{SN}&=&(-1)^{pq}[\beta,\alpha]_{SN}\\
(-1)^{p(r+1)}[\alpha,[\beta,\gamma]_{SN}]_{SN}&+&(-1)^{q(p+1)}[\beta,[\gamma,\alpha]_{SN}]_{SN}\nonumber\\
&+&(-1)^{r(q+1)}[\gamma,[\alpha,\beta]_{SN}]_{SN}=0
\end{eqnarray}
where $\gamma$ is a $r$-form. It is known that KY forms satisfy a graded Lie algebra structure with respect to the SN bracket in constant curvature manifolds \cite{Kastor Ray Traschen,Ertem1}. Since all KY forms in constant curvature manifolds are special KY forms, the graded Lie algebra property can be generalized to the special KY forms in all manifolds. So, we have
\begin{proposition}
On a manifold $M$, special KY forms satisfying (15) and (20) constitute a graded Lie algebra structure with respect to the SN bracket defined in (36).
\end{proposition}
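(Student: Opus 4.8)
The plan is to use the fact that the Schouten--Nijenhuis bracket is \emph{already} a graded Lie bracket on the full exterior algebra: the graded antisymmetry (37) and the graded Jacobi identity (38) hold for arbitrary differential forms. Consequently the whole content of the Proposition reduces to a single \emph{closure} statement: if $\alpha$ is a special KY $p$-form and $\beta$ a special KY $q$-form (both satisfying (15) and (29) with a common constant $c$), then $[\alpha,\beta]_{SN}$ is again a special KY $(p+q-1)$-form with the same $c$. Once closure is in hand, (37)--(38) restrict verbatim to the subspace of special KY forms; the common-$c$ hypothesis is the relevant one here since, by Proposition 2, every special KY form built from a Killing spinor of Killing number $\lambda$ carries $c=-4\lambda^{2}$. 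So the proof splits into two verifications for $\Phi:=[\alpha,\beta]_{SN}$: (i) that $\Phi$ satisfies the KY equation (15); and (ii) that $\Phi$ satisfies the special-KY condition (29), with the same $c$.

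For (i) I would first substitute the KY equations $\nabla_{X_a}\alpha=\frac{1}{p+1}i_{X_a}d\alpha$ and $\nabla_{X_a}\beta=\frac{1}{q+1}i_{X_a}d\beta$ into (36), so that $\Phi$ is written entirely in terms of $\alpha,\beta,d\alpha,d\beta$ and contractions. Then I would differentiate covariantly: since $\nabla$ is metric it passes through $\wedge$ and $i$ (equivalently one may compute at a point in a frame that is parallel there), and the Leibniz rule produces two kinds of terms — those where $\nabla$ lands on $\alpha$ or $\beta$, rewritten once more by (15), and those where it lands on $d\alpha$ or $d\beta$, rewritten by the special condition (29). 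In parallel I would compute $d\Phi$ from $d=e^b\wedge\nabla_{X_b}$ and then $i_Xd\Phi$, using the elementary identities $e^a\wedge i_{X_a}\gamma=k\gamma$ for a $k$-form $\gamma$, $i_{X^a}i_{X_a}\gamma=0$, $i_{X_a}e^b=\delta_a^{\ b}$ and $i_{X^a}e_a=n$. Collecting terms, the expected outcome is $\nabla_X\Phi=\frac{1}{p+q}\,i_Xd\Phi$, which is (15) for the $(p+q-1)$-form $\Phi$; coclosedness $\delta\Phi=0$ is then automatic, since $\delta\Phi=-i_{X^a}\nabla_{X_a}\Phi=-\frac{1}{p+q}\,i_{X^a}i_{X_a}d\Phi=0$.

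For (ii) I would take the formula for $d\Phi$ obtained in step (i) and differentiate it once more, again feeding in $\nabla_Xd\alpha=-c(p+1)\widetilde{X}\wedge\alpha$ and $\nabla_Xd\beta=-c(q+1)\widetilde{X}\wedge\beta$, together with (15) to eliminate the remaining $\nabla\alpha,\nabla\beta$ terms. The decisive point is the coefficient bookkeeping — a $(p+q-1)$-form carries the shift $p+q$ — and the computation should collapse to $\nabla_Xd\Phi=-c(p+q)\,\widetilde{X}\wedge\Phi$, which is exactly (29) for $\Phi$ with the unchanged constant $c$. It is convenient to record first, by the same manipulation used in the proof of Proposition 2, the auxiliary fact that $d\omega$ is a special CCKY $(p+1)$-form (with the same $c$) whenever $\omega$ is a special KY $p$-form: this confines all the intermediate objects to a controlled family and makes the reassembly transparent. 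As a sanity check, the case $p=q=1$ must reduce to the classical statement that the metric duals of special KY $1$-forms, i.e. Killing vector fields, close under the Lie bracket, with (36) reducing to (35).

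The main obstacle is purely computational: carrying the wedge/contraction index gymnastics through steps (i) and (ii) and verifying that the numerical coefficients reassemble into precisely $\frac{1}{p+q}$ and $-c(p+q)$ rather than some degree-dependent factor. Conceptually, the reason a general — not necessarily constant-curvature — manifold is admitted is exactly that (15) and (29) are assumed outright, so the curvature of $M$ is never invoked; ``specialness'' is precisely the amount of structure that allows the second covariant derivative of $\Phi$ to be re-expressed in terms of $\Phi$ itself. One should also state explicitly the common-$c$ hypothesis: with distinct constants the bracket would still be a KY form but in general not special, so the graded Lie algebra is the one attached to a fixed $c$ (automatically the case for bilinears of Killing spinors with a common Killing number).
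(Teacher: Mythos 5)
Your plan is essentially the paper's own proof: substitute (15) into the SN bracket, differentiate in a parallel (normal-coordinate) frame using (15) and (29), compare with $d[\alpha,\beta]_{SN}$ computed via $d=e^a\wedge\nabla_{X_a}$ to obtain the KY equation with coefficient $\frac{1}{p+q}$, and then differentiate $d[\alpha,\beta]_{SN}$ once more to land on $\nabla_X d\Phi=-c(p+q)\widetilde{X}\wedge\Phi$, which is exactly the paper's equations (44) and (47). Your explicit remark that closure requires a common constant $c$ (and that graded antisymmetry and Jacobi are inherited from the SN bracket on all forms) is a correct and useful clarification of a hypothesis the paper leaves implicit, but the route is the same.
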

\begin{proof}
For a KY $p$-form $\omega_1$ and a KY $q$-form $\omega_2$, the SN bracket is written as
\begin{eqnarray}
[\omega_1,\omega_2]_{SN}&=&i_{X^a}\omega_1\wedge\nabla_{X_a}\omega_2+(-1)^{pq}i_{X^a}\omega_2\wedge\nabla_{X_a}\omega_1\nonumber\\
&=&\frac{1}{q+1}i_{X^a}\omega_1\wedge i_{X_a}d\omega_2+\frac{(-1)^p}{p+1}i_{X^a}d\omega_1\wedge i_{X_a}\omega_2
\end{eqnarray}
where we have used (15). If we apply the covariant derivative $\nabla_{X_a}$ with respect to a frame basis vector $X_a$ to (39), we find
\begin{eqnarray}
\nabla_{X_a}[\omega_1,\omega_2]_{SN}&=&\frac{1}{q+1}\bigg(\nabla_{X_a}i_{X^b}\omega_1\wedge i_{X_b}d\omega_2+i_{X^b}\omega_1\wedge\nabla_{X_a}i_{X_b}d\omega_2\bigg)\nonumber\\
&&+\frac{(-1)^p}{p+1}\bigg(\nabla_{X_a}i_{X^b}d\omega_1\wedge i_{X_b}\omega_2+i_{X^b}d\omega_1\wedge\nabla_{X_a}i_{X_b}\omega_2\bigg).\nonumber\\
\end{eqnarray}
In general, we have the relation $[\nabla_X,i_Y]=i_{\nabla_XY}$ for any vector fields $X$ and $Y$. If we choose the normal coordinate frame $\{X_a\}$, then this relation transforms into $[i_{X_a},\nabla_{X_b}]=0$ since we have $\nabla_{X_a}X_b=0$ in that case. By using the normal coordinates and considering that $\omega_1$ and $\omega_2$ satisfy the equations (15) and (20), we obtain
\begin{eqnarray}
\nabla_{X_a}[\omega_1,\omega_2]_{SN}&=&\frac{1}{q+1}\bigg(\frac{1}{p+1}i_{X^b}i_{X_a}d\omega_1\wedge i_{X_b}d\omega_2\nonumber\\
&&-c(q+1)i_{X^b}\omega_1\wedge i_{X_b}\left(e_a\wedge\omega_2\right)\bigg)\nonumber\\
&&+\frac{(-1)^p}{p+1}\bigg(-c(p+1)i_{X^b}\left(e_a\wedge\omega_1\right)\wedge i_{X_b}\omega_2\nonumber\\
&&+\frac{1}{q+1}i_{X^b}d\omega_1\wedge i_{X_b}i_{X_a}d\omega_2\bigg)\nonumber\\
&=&-\frac{1}{(p+1)(q+1)}i_{X_a}\left(i_{X^b}d\omega_1\wedge i_{X_b}d\omega_2\right)-ci_{X_a}\left(\omega_1\wedge\omega_2\right)\nonumber\\
\end{eqnarray}
where we have used $i_{X^b}e_a=\delta^b_a$ and the anti-derivative property of $i_{X^b}$ in the last line. If we take the wedge product of (41) with $e^a\wedge$ from the left and use the equality $e^a\wedge\nabla_{X_a}=d$ for the torsion-free connection $\nabla_{X_a}$, we find
\begin{equation}
d[\omega_1,\omega_2]_{SN}=-\frac{p+q}{(p+1)(q+1)}i_{X^b}d\omega_1\wedge i_{X_b}d\omega_2-c(p+q)\omega_1\wedge\omega_2
\end{equation}
where we have used the equality $e^a\wedge i_{X_a}\alpha=p\alpha$ for any $p$-form $\alpha$. By taking the interior derivative $i_{X_a}$ of (42) with respect to $X_a$ and divide it with $(p+q)$, one finds
\begin{equation}
\frac{1}{p+q}i_{X_a}d[\omega_1,\omega_2]_{SN}=-\frac{1}{(p+1)(q+1)}i_{X_a}\left(i_{X^b}d\omega_1\wedge i_{X_b}d\omega_2\right)-ci_{X_a}\left(\omega_1\wedge\omega_2\right).
\end{equation}
If we compare the right hand sides of (41) and (43), we can easily see that
\begin{equation}
\nabla_{X_a}[\omega_1,\omega_2]_{SN}=\frac{1}{p+q}i_{X_a}d[\omega_1,\omega_2]_{SN}
\end{equation}
which is the KY equation (15) for the $(p+q-1)$-form $[\omega_1,\omega_2]_{SN}$. Then, we show that for a special KY $p$-form $\omega_1$ and a special KY $q$-form $\omega_2$, their SN bracket $[\omega_1,\omega_2]_{SN}$ is a KY $(p+q-1)$-form. However, we still have to show that it is a special KY $(p+q-1)$-form. To show this, we take the covariant derivative of (42) which is equal to
\begin{eqnarray}
\nabla_{X_a}d[\omega_1,\omega_2]_{SN}&=&-\frac{p+q}{(p+1)(q+1)}\bigg(i_{X^b}\nabla_{X_a}d\omega_1\wedge i_{X_b}d\omega_2\nonumber\\
&&+i_{X^b}d\omega_1\wedge i_{X_b}\nabla_{X_a}d\omega_2\bigg)\nonumber\\
&&-c(p+q)\bigg(\nabla_{X_a}\omega_1\wedge\omega_2+\omega_1\wedge\nabla_{X_a}\omega_2\bigg)
\end{eqnarray}
where we have used $[i_{X^b},\nabla_{X_a}]=0$ in normal coordinates. Since $\omega_1$ and $\omega_2$ satisfy (15) and (20), we find
\begin{eqnarray}
\nabla_{X_a}d[\omega_1,\omega_2]_{SN}&=&-\frac{p+q}{(p+1)(q+1)}\bigg(-c(p+1)i_{X^b}(e_a\wedge\omega_1)\wedge i_{X_b}d\omega_2\nonumber\\
&&-c(q+1)i_{X^b}d\omega_1\wedge i_{X_b}(e_a\wedge\omega_2)\bigg)\nonumber\\
&&-c(p+q)\bigg(\frac{1}{p+1}i_{X_a}d\omega_1\wedge\omega_2+\frac{1}{q+1}\omega_1\wedge i_{X_a}d\omega_2\bigg)\nonumber\\
&=&-c(p+q)\bigg(\frac{1}{q+1}e_a\wedge i_{X^b}\omega_1\wedge i_{X_b}d\omega_2\nonumber\\
&&+\frac{(-1)^p}{p+1}e_a\wedge i_{X^b}d\omega_1\wedge i_{X_b}\omega_2\bigg)
\end{eqnarray}
where we have used $i_{X^b}e_a=\delta^b_a$ and the anti-derivative property of $i_{X^b}$ in the last line. By comparing with (39), we obtain
\begin{equation}
\nabla_{X_a}d[\omega_1,\omega_2]_{SN}=-c(p+q)e_a\wedge [\omega_1,\omega_2]_{SN}
\end{equation}
which is the equation (20) for KY $(p+q-1)$-form $[\omega_1,\omega_2]_{SN}$. This shows that $[\omega_1,\omega_2]_{SN}$ is a special KY form and hence special KY forms satisfy a graded Lie algebra structure under SN bracket.
\end{proof}

For the more general case of CKY forms, we can also define a graded Lie bracket which generalizes the SN bracket. For a CKY $p$-form $\omega_1$ and a CKY $q$-form $\omega_2$, we define the CKY bracket as follows
\begin{eqnarray}
[\omega_1,\omega_2]_{CKY}&=&\frac{1}{q+1}i_{X_a}\omega_1\wedge i_{X^a}d\omega_2+\frac{(-1)^p}{p+1}i_{X_a}d\omega_1\wedge i_{X^a}\omega_2\nonumber\\
&&+\frac{(-1)^p}{n-q+1}\omega_1\wedge\delta\omega_2+\frac{1}{n-p+1}\delta\omega_1\wedge\omega_2.
\end{eqnarray}
It is shown in \cite{Ertem3} that in constant curvature manifolds $[\omega_1,\omega_2]_{CKY}$ is a CKY $(p+q-1)$-form and CKY bracket has the graded Lie bracket properties. So, in constant curvature manifolds, CKY forms satisfy a graded Lie algebra structure with respect to the CKY bracket and we have the following relation
\begin{equation}
\nabla_{X_a}[\omega_1,\omega_2]_{CKY}=\frac{1}{p+q}i_{X_a}d[\omega_1,\omega_2]_{CKY}-\frac{1}{n-p-q+2}e_a\wedge\delta[\omega_1,\omega_2]_{CKY}.
\end{equation}
This is also true in Einstein manifolds for normal CKY forms which have the following special integrabilitiy conditions for a normal CKY $p$-form $\omega$ in $n$ dimensions
\begin{eqnarray}
\nabla_{X_a}d\omega&=&\frac{p+1}{p(n-p+1)}e_a\wedge d\delta\omega+2(p+1)K_a\wedge\omega\\
\nabla_{X_a}\delta\omega&=&-\frac{n-p+1}{(p+1)(n-p)}i_{X_a}\delta d\omega-2(n-p+1)i_{X^b}K_a\wedge i_{X_b}\omega
\end{eqnarray}
where the Schouten rho 1-form is defined by
\begin{equation}
K_a=\frac{1}{n-2}\left(\frac{\mathcal{R}}{2(n-1)}e_a-P_a\right)
\end{equation}
for Ricci 1-forms $P_a$ and the scalar curvature ${\mathcal{R}}$. It can easily be seen that for two KY forms $\omega_1$ and $\omega_2$ with $\delta\omega_1=0$ and $\delta\omega_2=0$, the CKY bracket reduces to the SN bracket given in (39) and for $p=q=1$ it reduces to the Lie bracket of vector fields in (35). However, for general CKY forms, it differs from the generalization of the SN bracket to CKY forms and hence is a new bracket.

For more general manifolds, there is another graded Lie algebra structure with respect to the CKY bracket that contains special KY and special CCKY forms as elements which are the special forms generated by Killing spinors. Note that, it can be seen from Propositions 1 and 2 that if the odd degree $p$-form Dirac currents of a Killing spinor are special KY forms, then its even degree $p$-form Dirac currents correspond to special CCKY forms and conversely, if the even degree $p$-form Dirac currents of a Killing spinor are special KY forms, then its odd degree $p$-form Dirac currents correspond to special CCKY forms.

\begin{theorem}
On a manifold $M$, odd degree special KY forms satisfying (15) and (20) and even degree special CCKY forms satisfying (17) and (21) constitute a Lie algebra structure with respect to the CKY bracket $[\,,\,]_{CKY}$ given in (48) in all dimensions.
\end{theorem}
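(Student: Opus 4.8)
The plan is to establish separately the two ingredients of the claimed structure. Write $\mathcal{S}$ for the set consisting of the odd-degree special KY forms together with the even-degree special CCKY forms (sharing the constant $c$ appearing in (29) and (30)). The first task is closure, $[\mathcal{S},\mathcal{S}]_{CKY}\subseteq\mathcal{S}$; the second is that the restriction of $[\,,\,]_{CKY}$ to $\mathcal{S}$ obeys graded antisymmetry and the graded Jacobi identity (38).

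For closure I would distinguish three cases according to the types of the two arguments, tracking degrees throughout. (i) If $\omega_1,\omega_2$ are special KY forms of odd degrees $p,q$, then $\delta\omega_1=\delta\omega_2=0$, the last two terms of (48) vanish, and $[\omega_1,\omega_2]_{CKY}=[\omega_1,\omega_2]_{SN}$; by Proposition 3 this is a special KY $(p+q-1)$-form, and $p+q-1$ is odd, so it belongs to $\mathcal{S}$. (ii) If $\omega_1,\omega_2$ are special CCKY forms of even degrees $p,q$, then $d\omega_1=d\omega_2=0$ and (48) collapses to $\frac{1}{n-q+1}\omega_1\wedge\delta\omega_2+\frac{1}{n-p+1}\delta\omega_1\wedge\omega_2$; I would show this $(p+q-1)$-form, which has odd degree, is a special KY form by rerunning the computation of Proposition 3 almost verbatim with the roles of $d$ and $\delta$ interchanged, that is, replacing the use of (15) and (29) by the use of (17) and (30) for both forms. (iii) If $\omega_1$ is a special KY $p$-form ($p$ odd, $\delta\omega_1=0$) and $\omega_2$ a special CCKY $q$-form ($q$ even, $d\omega_2=0$), the bracket reduces to $-\frac{1}{p+1}i_{X_a}d\omega_1\wedge i_{X^a}\omega_2-\frac{1}{n-q+1}\omega_1\wedge\delta\omega_2$; applying $\nabla_{X_a}$ to this expression and substituting (15), (29) for $\omega_1$ and (17), (30) for $\omega_2$, together with the torsion-free identities $d=e^a\wedge\nabla_{X_a}$, $\delta=-i_{X^a}\nabla_{X_a}$ and $[i_{X_a},\nabla_{X_b}]=0$ in normal coordinates and the algebraic relations $i_{X^b}e_a=\delta^b_a$, $e^a\wedge i_{X_a}\alpha=p\alpha$, I expect to obtain the CCKY equation (17) and the special-CCKY condition (30) for the resulting $(p+q-1)$-form, which has even degree and so again lies in $\mathcal{S}$. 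In this way the KY/CCKY dichotomy is itself a $\mathbb{Z}_2$-grading respected by $[\,,\,]_{CKY}$, and no constraint on the dimension $n$ enters at any point, which is why the statement holds in all dimensions.

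Graded antisymmetry $[\omega_1,\omega_2]_{CKY}=(-1)^{pq}[\omega_2,\omega_1]_{CKY}$ on $\mathcal{S}$ follows by inspection of (48): for two KY forms it is the known antisymmetry of the SN bracket, and in cases (ii) and (iii) the two surviving terms of (48) exchange under $\omega_1\leftrightarrow\omega_2$ with exactly the sign $(-1)^{pq}=+1$. The substantive point is the graded Jacobi identity; by closure every iterated bracket of elements of $\mathcal{S}$ again lies in $\mathcal{S}$, so the identity is at least well posed. I would prove it by the method of \cite{Ertem3}, where $[\,,\,]_{CKY}$ is shown to be a graded Lie bracket on the CKY forms of a constant-curvature space: expand the cyclic sum (38), observe that the terms built purely from exterior derivatives cancel by the universal Jacobi identity for the SN bracket, and show that the terms containing co-derivatives cancel after repeated use of the integrability conditions --- with the single modification that where \cite{Ertem3} invokes $R_{ab}=c\,e_a\wedge e_b$ to generate (27) and (28), one here uses the hypotheses (29) and (30) directly, and uses $\delta\omega=0$ on the KY constituents and $d\omega=0$ on the CCKY constituents to remove the leftover terms. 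I expect this last step --- keeping track of the co-derivative contributions and their signs across the eight sub-cases indexed by which of the three forms are KY and which are CCKY --- to be the main obstacle; everything else either quotes Proposition 3 or is a mechanical repetition of it under the duality $d\leftrightarrow\delta$.
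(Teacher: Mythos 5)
Your proposal follows essentially the same route as the paper's proof: the same three-case split (odd KY with odd KY reducing to the SN bracket and Proposition 3, odd KY with even CCKY, even CCKY with even CCKY), the same reductions of the bracket (48) under $\delta\omega=0$ or $d\omega=0$, and the same covariant-derivative computations using (15), (29), (17), (30) to land back on a special odd KY or special even CCKY form with no restriction on the dimension. The antisymmetry/Jacobi part is treated at the same level of detail in both places --- the paper simply asserts these properties of $[\,,\,]_{CKY}$ (by appeal to \cite{Ertem3}) rather than verifying them case by case, which is exactly what your sketch proposes --- so the proposal is correct and matches the paper's argument.
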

\begin{proof}
We consider the cases of two special odd KY forms, two special even CCKY forms and one special odd KY form with one special even CCKY form separately. For a special odd KY $p$-form $\omega_1$ and a special odd KY $q$-form $\omega_2$, we have $\delta\omega_1=0$ and $\delta\omega_2=0$ and the CKY bracket in (48) reduces to the SN bracket written in (39) for special KY forms. From Proposition 3, $[\omega_1,\omega_2]_{SN}$ is also a special odd KY $(p+q-1)$-form since $p+q-1$ is an odd number for $p$ and $q$ are odd.

If we take a special odd KY $p$-form $\omega_1$ and a special even CCKY $q$-form $\omega_2$, then we have $\delta\omega_1=0$ and $d\omega_2=0$. So, the CKY bracket reduces to
\begin{equation}
[\omega_1,\omega_2]_{CKY}=-\frac{1}{p+1}i_{X^a}d\omega_1\wedge i_{X_a}\omega_2-\frac{1}{n-q+1}\omega_1\wedge\delta\omega_2.
\end{equation}
The covariant derivative of (53) gives
\begin{eqnarray}
\nabla_{X_a}[\omega_1,\omega_2]_{CKY}&=&-\frac{1}{p+1}i_{X^b}\nabla_{X_a}d\omega_1\wedge i_{X_b}\omega_2\nonumber\\
&&-\frac{1}{p+1}i_{X^b}d\omega_1\wedge i_{X_b}\nabla_{X_a}\omega_2\nonumber\\
&&-\frac{1}{n-q+1}\nabla_{X_a}\omega_1\wedge\delta\omega_2\nonumber\\
&&-\frac{1}{n-q+1}\omega_1\wedge\nabla_{X_a}\delta\omega_2.
\end{eqnarray}
If we use (15) and (20) for $\omega_1$ and (17) and (21) for $\omega_2$, we find
\begin{eqnarray}
\nabla_{X_a}[\omega_1,\omega_2]_{CKY}&=&ci_{X^b}(e_a\wedge\omega_1)\wedge i_{X^b}\omega_2\nonumber\\
&&+\frac{1}{(p+1)(n-q+1)}i_{X^b}d\omega_1\wedge i_{X_b}(e_a\wedge\delta\omega_2)\nonumber\\
&&-\frac{1}{(p+1)(n-q+1)}i_{X_a}d\omega_1\wedge\delta\omega_2-c\omega_1\wedge i_{X_a}\omega_2\nonumber\\
&=&-ce_a\wedge i_{X^b}\omega_1\wedge i_{X_b}\omega_2\nonumber\\
&&+\frac{1}{(p+1)(n-q+1)}e_a\wedge i_{X^b}d\omega_1\wedge i_{X_b}\delta\omega_2.
\end{eqnarray}
By taking the wedge product with $e^a\wedge$ from the left, we have
\begin{equation}
d[\omega_1,\omega_2]_{CKY}=e^a\wedge\nabla_{X_a}[\omega_1,\omega_2]_{CKY}=0.
\end{equation}
Moreover, if we calculate the interior derivative of (54) with respect to $X^a$, we find
\begin{eqnarray}
\delta[\omega_1,\omega_2]_{CKY}&=&-i_{X^a}\nabla_{X_a}[\omega_1,\omega_2]_{CKY}\nonumber\\
&=&c(n-(p+q-2))(i_{X^b}\omega_1\wedge i_{X_b}\omega_2)\nonumber\\
&&-\frac{n-(p+q-2)}{(p+1)(n-q+1)}i_{X^b}d\omega_1\wedge i_{X_b}\delta\omega_2.
\end{eqnarray}
Hence, by comparing (55) and (57), we obtain
\begin{equation}
\nabla_{X_a}[\omega_1,\omega_2]_{CKY}=-\frac{1}{n-(p+q-2)}e_a\wedge\delta[\omega_1,\omega_2]_{CKY}.
\end{equation}
This means that $[\omega_1,\omega_2]_{CKY}$ is an even CCKY $(p+q-1)$-form since for $p$ odd and $q$ even, $p+q-1$ is an even number. To prove the specialty, we need to apply the covariant derivative $\nabla_{X_a}$ to (57) and this gives us
\begin{eqnarray}
\nabla_{X_a}\delta[\omega_1,\omega_2]_{CKY}&=&c(n-(p+q-2))\big(i_{X^b}\nabla_{X_a}\omega_1\wedge i_{X_b}\omega_2\nonumber\\
&&+i_{X^b}\omega_1\wedge i_{X_b}\nabla_{X_a}\omega_2\big)\nonumber\\
&&-\frac{n-(p+q-2)}{(p+1)(n-q+1)}\big(i_{X^b}\nabla_{X_a}d\omega_1\nonumber\\
&&+i_{X^b}d\omega_1\wedge i_{X_b}\nabla_{X_a}\delta\omega_2\big)\nonumber\\
&=&\frac{c(n-(p+q-2))}{p+1}i_{X^b}i_{X_a}d\omega_1\wedge i_{X_b}\omega_2\nonumber\\
&&-\frac{c(n-(p+q-2))}{n-q+1}i_{X_a}\omega_1\wedge\delta\omega_2\nonumber\\
&&+\frac{c(n-(p+q-2))}{n-q+1}\omega_1\wedge i_{X_a}\delta\omega_2\nonumber\\
&&-\frac{c(n-(p+q-2))}{p+1}i_{X^b}d\omega_1\wedge i_{X_b}i_{X_a}\omega_2
\end{eqnarray}
where we have used (15) and (20) for $\omega_1$ and (17) and (21) for $\omega_2$. By comparing (59) with (53), one can easily see that
\begin{equation}
\nabla_{X_a}\delta[\omega_1,\omega_2]_{CKY}=c(n-(p+q-2))i_{X_a}[\omega_1,\omega_2]_{CKY}.
\end{equation}
This means that the CCKY $(p+q-1)$-form $[\omega_1,\omega_2]_{CKY}$ is a special even CCKY form from (21).

On the other hand, if we take $\omega_1$ as a special even CCKY $p$-form and $\omega_2$ as a special even CCKY $q$-form, the CCKY bracket is written as
\begin{equation}
[\omega_1,\omega_2]_{CKY}=\frac{1}{n-q+1}\omega_1\wedge\delta\omega_2+\frac{1}{n-p+1}\delta\omega_1\wedge\omega_2.
\end{equation}
The covariant derivative of (61) gives
\begin{eqnarray}
\nabla_{X_a}[\omega_1,\omega_2]_{CKY}&=&\frac{1}{n-q+1}\left(\nabla_{X_a}\omega_1\wedge\delta\omega_2+\omega_1\wedge\nabla_{X_a}\delta\omega_2\right)\nonumber\\
&&+\frac{1}{n-p+1}\left(\nabla_{X_a}\delta\omega_1\wedge\omega_2+\delta\omega_1\wedge\nabla_{X_a}\omega_2\right)\nonumber\\
&=&-\frac{1}{(n-p+1)(n-q+1)}e_a\wedge\delta\omega_1\wedge\delta\omega_2\nonumber\\
&&+c\omega_1\wedge i_{X_a}\omega_2+ci_{X_a}\omega_1\wedge\omega_2\nonumber\\
&&+\frac{1}{(n-p+1)(n-q+1)}e_a\wedge\delta\omega_1\wedge\delta\omega_2\nonumber\\
&=&ci_{X_a}\left(\omega_1\wedge\omega_2\right)
\end{eqnarray}
where we have used (17) and (21). By taking the interior derivative with respect to $X^a$, we find
\begin{equation}
\delta[\omega_1,\omega_2]_{CKY}=-i_{X^a}\nabla_{X_a}[\omega_1,\omega_2]_{CKY}=0
\end{equation}
and the wedge product of (62) with $e^a\wedge$ from the left gives
\begin{equation}
d[\omega_1,\omega_2]_{CKY}=e^a\wedge\nabla_{X_a}[\omega_1,\omega_2]_{CKY}=c(p+q)\omega_1\wedge\omega_2.
\end{equation}
So, by comparing (62) and (64), we obtain
\begin{equation}
\nabla_{X_a}[\omega_1,\omega_2]_{CKY}=\frac{1}{p+q}i_{X_a}d[\omega_1,\omega_2]_{CKY}.
\end{equation}
This shows that $[\omega_1,\omega_2]_{CKY}$ is an odd KY $(p+q-1)$-form since $p+q-1$ is an odd number for $p$ and $q$ are even. To prove the specialty, we take the covariant derivative of (64) which gives
\begin{eqnarray}
\nabla_{X_a}d[\omega_1,\omega_2]_{CKY}&=&c(p+q)\left(\nabla_{X_a}\omega_1\wedge\omega_2+\omega_1\wedge\nabla_{X_a}\omega_2\right)\nonumber\\
&=&c(p+q)\bigg(-\frac{1}{n-p+1}e_a\wedge\delta\omega_1\wedge\omega_2\nonumber\\
&&-\frac{1}{n-q+1}\omega_1\wedge e_a\wedge\delta\omega_2\bigg)
\end{eqnarray}
where we have used (17). By comparing with (61), one can see that
\begin{equation}
\nabla_{X_a}d[\omega_1,\omega_2]_{CKY}=-c(p+q)e_a\wedge[\omega_1,\omega_2]_{CKY}.
\end{equation}
So, the KY $(p+q-1)$-form $[\omega_1,\omega_2]_{CKY}$ is a special odd KY form from (20).

In all cases, the CKY bracket has the property $[\omega_1,\omega_2]_{CKY}=-[\omega_2,\omega_1]_{CKY}$ and satisfies the Jacobi identity with no restriction on the dimension. Hence, we prove that the special odd KY forms and special even CCKY forms satisfy a Lie algebra structure with respect to the CKY bracket such that two special odd KY forms gives again a special odd KY form, one special odd KY form and one special even CCKY form gives another special even CCKY form, two special even CCKY forms give a special odd KY form which is shown diagrammatically as
\begin{eqnarray}
 \textrm{special odd KY}\times\textrm{special odd KY}&\longrightarrow&\textrm{special odd KY}\nonumber\\
 \textrm{special odd KY}\times\textrm{special even CCKY}&\longrightarrow&\textrm{special even CCKY}\nonumber\\
 \textrm{special even CCKY}\times\textrm{special even CCKY}&\longrightarrow&\textrm{special odd KY}\nonumber
\end{eqnarray}
\end{proof}

On the other hand, if the Killing spinors on a manifold $M$ generate even degree special KY forms and odd degree special CCKY forms, then the Lie algebra structure is dependent on the dimension as stated in the following

\begin{theorem}
On a manifold $M$, even degree special KY forms satisfying (15) and (20) and odd degree special CCKY forms satisfying (17) and (21) constitute a Lie algebra structure with respect to the Hodge star of the CKY bracket $*[\,,\,]_{CKY}$ given in (48) in even dimensions.
\end{theorem}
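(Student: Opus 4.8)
The plan is to replay the three-case argument in the proof of Theorem 1 and then apply the Hodge star to each output, using that $*$ is a linear isomorphism which, by the Hodge-duality of the CKY equation recalled in the excerpt (cf. \cite{Ertem4}), carries KY $p$-forms to CCKY $(n-p)$-forms and vice versa. Write $\mathcal{A}$ for the span of the even-degree special KY forms and the odd-degree special CCKY forms, and $\mathcal{B}$ for the span of the odd-degree special KY forms and the even-degree special CCKY forms, i.e. the collection of Theorem 1. The elements of $\mathcal{A}$ are in particular CKY forms, so $[\,,\,]_{CKY}$ is defined on $\mathcal{A}$, and the claim is that $*[\,,\,]_{CKY}$ maps $\mathcal{A}\times\mathcal{A}$ into $\mathcal{A}$ and makes it a Lie algebra; I would also record that $*$ is a bijection $\mathcal{A}\to\mathcal{B}$, which is what ties this structure to Theorem 1.

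First I would dispose of the three cases. (i) For two special even KY forms $\omega_1$ (a $p$-form) and $\omega_2$ (a $q$-form), $\delta\omega_1=\delta\omega_2=0$, so (48) collapses to the SN bracket (39), and Proposition 3 gives at once that $[\omega_1,\omega_2]_{SN}$ is a special KY $(p+q-1)$-form; since $p,q$ are even, $p+q-1$ is odd. (ii) For a special even KY $p$-form $\omega_1$ and a special odd CCKY $q$-form $\omega_2$ we have $\delta\omega_1=0$ and $d\omega_2=0$, so only two terms of (48) survive; computing $\nabla_{X_a}[\omega_1,\omega_2]_{CKY}$ in normal coordinates and substituting (15), (29) for $\omega_1$ and (17), (30) for $\omega_2$ --- the same steps that produced (54)--(60), with the two parities interchanged --- gives $d[\omega_1,\omega_2]_{CKY}=0$ together with $\nabla_{X_a}\delta[\omega_1,\omega_2]_{CKY}=c(n-(p+q-2))\,i_{X_a}[\omega_1,\omega_2]_{CKY}$, so $[\omega_1,\omega_2]_{CKY}$ is a special CCKY $(p+q-1)$-form with $p+q-1$ even. (iii) For two special odd CCKY forms, $d\omega_1=d\omega_2=0$, and the computation copying (61)--(68) yields $\nabla_{X_a}[\omega_1,\omega_2]_{CKY}=c\,i_{X_a}(\omega_1\wedge\omega_2)$ and $\nabla_{X_a}d[\omega_1,\omega_2]_{CKY}=-c(p+q)\,e_a\wedge[\omega_1,\omega_2]_{CKY}$, so $[\omega_1,\omega_2]_{CKY}$ is a special KY $(p+q-1)$-form with $p+q-1$ odd. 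Applying the Hodge star (and using that it carries the ``special'' conditions to ``special'' conditions, see below), $*[\omega_1,\omega_2]_{CKY}$ is a special CCKY form in cases (i) and (iii) and a special KY form in case (ii), of degree $n-(p+q-1)$. This is exactly where the dimension enters: $k$ and $n-k$ have the same parity precisely when $n$ is even, so only for $n$ even do the odd-degree forms of (i) and (iii) Hodge-dualize to odd-degree special CCKY forms and the even-degree form of (ii) to an even-degree special KY form; hence only then is $\mathcal{A}$ closed under $*[\,,\,]_{CKY}$, with the multiplication table even special KY $\times$ even special KY $\to$ odd special CCKY, even special KY $\times$ odd special CCKY $\to$ even special KY, odd special CCKY $\times$ odd special CCKY $\to$ odd special CCKY.

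For the remaining axioms, the graded (anti)symmetry (37) and the Jacobi identity (38) of $*[\,,\,]_{CKY}$ on $\mathcal{A}$ are inherited from those of $[\,,\,]_{CKY}$ --- available from \cite{Ertem3} and the proof of Theorem 1 --- together with linearity of $*$ and the fact that $*$ identifies $\mathcal{A}$ with $\mathcal{B}$; no hypothesis beyond $n$ even is needed. I expect the only genuinely new work, beyond transcribing the three computations of Theorem 1, to be the sign and parity bookkeeping attached to the Hodge star: one must verify, in the paper's conventions for $*$, $\xi$ and $\eta$, that the Hodge dual of a special KY $p$-form is a special CCKY $(n-p)$-form with a well-defined constant (this follows from $*$ commuting with the Levi-Civita connection $\nabla$, together with $\delta\omega=\pm *d*\omega$ and the Hodge-duality between $i_X$ and $\widetilde X\wedge\,\cdot\,$, so the ``special'' constant is at worst changed in sign), and that the parity count just used for closure is precisely what pins the hypothesis to even dimensions. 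Everything else is the argument of Theorem 1 with the labels ``KY''$\leftrightarrow$``CCKY'' and ``even''$\leftrightarrow$``odd'' interchanged and a Hodge star appended at the end.
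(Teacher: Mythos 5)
Your proposal follows essentially the same route as the paper's proof: the same three-case analysis (reduction to the SN bracket plus Proposition 3 for two special even KY forms, and the two covariant-derivative computations of Theorem 1 repeated with the parities interchanged), followed by Hodge dualization of the result and the parity observation that $k$ and $n-k$ have the same parity exactly when $n$ is even, yielding the same multiplication table. Like the paper, you assert rather than verify in detail the antisymmetry and Jacobi identity of the bracket $*[\,,\,]_{CKY}$ and the fact that Hodge duality preserves the special integrability conditions, so the two arguments agree in both substance and level of detail.
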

\begin{proof}
We use the property that KY forms and CCKY forms are Hodge duals of each other. We consider the cases of two special even KY forms, two special odd CCKY forms and one special even KY form with one special odd CCKY form separately in even dimensions. For a special even KY $p$-form $\omega_1$ and a special even KY $q$-form $\omega_2$, we have $\delta\omega_1=0$ and $\delta\omega_2=0$ and the CKY bracket in (48) reduces to the SN bracket for special KY forms. $[\omega_1,\omega_2]_{SN}$ corresponds to a special odd KY $(p+q-1)$-form since $p+q-1$ is an odd number for $p$ and $q$ are even. However, the Hodge dual of a special odd KY form is a special odd CCKY form in even $n$ dimensions. So, $*[\omega_1,\omega_2]_{CKY}=*[\omega_1,\omega_2]_{SN}$ is a special odd CCKY $(n-(p+q-1))$-form.

If we take a special even KY $p$-form $\omega_1$ and a special odd CCKY $q$-form $\omega_2$, then we have $\delta\omega_1=0$ and $d\omega_2=0$. So, the CKY bracket reduces to
\begin{equation}
[\omega_1,\omega_2]_{CKY}=\frac{1}{p+1}i_{X^a}d\omega_1\wedge i_{X_a}\omega_2+\frac{1}{n-q+1}\omega_1\wedge\delta\omega_2.
\end{equation}
From the same considerations in (54) and (55), the covariant derivative of (68) gives
\begin{equation}
\nabla_{X_a}[\omega_1,\omega_2]_{CKY}=ce_a\wedge i_{X^b}\omega_1\wedge i_{X_b}\omega_2+\frac{1}{(p+1)(n-q+1)}e_a\wedge i_{X^b}d\omega_1\wedge i_{X_b}\delta\omega_2.
\end{equation}
By taking the wedge product with $e^a\wedge$ from the left, we find
\begin{equation}
d[\omega_1,\omega_2]_{CKY}=0
\end{equation}
and the interior derivative of (69) gives
\begin{eqnarray}
\delta[\omega_1,\omega_2]_{CKY}&=&-c(n-(p+q-2))i_{X^b}\omega_1\wedge i_{X_b}\omega_2\nonumber\\
&&-\frac{n-(p+q-2)}{(p+1)(n-q+1)}i_{X^b}d\omega_1\wedge i_{X_b}\delta\omega_2.
\end{eqnarray}
Hence, by comparing (55) and (57), we obtain
\begin{equation}
\nabla_{X_a}[\omega_1,\omega_2]_{CKY}=-\frac{1}{n-(p+q-2)}e_a\wedge\delta[\omega_1,\omega_2]_{CKY}.
\end{equation}
This means that $[\omega_1,\omega_2]_{CKY}$ is an even CCKY $(p+q-1)$-form since for $p$ even and $q$ odd, $p+q-1$ is an even number. So, from the Hodge duality property between KY and CCKY forms, $*[\omega_1,\omega_2]_{CKY}$ is an even KY $(n-(p+q-1))$-form in even $n$ dimensions. To prove the specialty, we apply the covariant derivative $\nabla_{X_a}$ to (71) and this gives
\begin{eqnarray}
\nabla_{X_a}\delta[\omega_1,\omega_2]_{CKY}&=&-\frac{c(n-(p+q-2))}{p+1}i_{X^b}i_{X_a}d\omega_1\wedge i_{X_b}\omega_2\nonumber\\
&&+\frac{c(n-(p+q-2))}{n-q+1}i_{X_a}\omega_1\wedge\delta\omega_2\nonumber\\
&&+\frac{c(n-(p+q-2))}{n-q+1}\omega_1\wedge i_{X_a}\delta\omega_2\nonumber\\
&&-\frac{c(n-(p+q-2))}{p+1}i_{X^b}d\omega_1\wedge i_{X_b}i_{X_a}\omega_2
\end{eqnarray}
by using the same procedures as in (59). The comparison of (73) with (68) shows that we have
\begin{equation}
\nabla_{X_a}\delta[\omega_1,\omega_2]_{CKY}=c(n-(p+q-2))i_{X_a}[\omega_1,\omega_2]_{CKY}
\end{equation}
and this means that $[\omega_1,\omega_2]_{CKY}$ is a special even CCKY $(p+q-1)$-form and $*[\omega_1,\omega_2]_{CKY}$ is a special even KY $(n-(p+q-1))$-form in even dimensions.

On the other hand, if we take $\omega_1$ as a special odd CCKY $p$-form and $\omega_2$ as a special odd CCKY $q$-form, the CKY bracket corresponds to
\begin{equation}
[\omega_1,\omega_2]_{CKY}=-\frac{1}{n-q+1}\omega_1\wedge\delta\omega_2+\frac{1}{n-p+1}\delta\omega_1\wedge\omega_2
\end{equation}
and the covariant derivative of it gives
\begin{equation}
\nabla_{X_a}[\omega_1,\omega_2]_{CKY}=ci_{X_a}(\omega_1\wedge\omega_2).
\end{equation}
By using (76), the exterior and co-derivatives can be found as
\begin{eqnarray}
d[\omega_1,\omega_2]_{CKY}&=&c(p+q)\omega_1\wedge\omega_2\\
\delta[\omega_1,\omega_2]_{CKY}&=&0.
\end{eqnarray}
From the comparison of (76) and (77), one can see that
\begin{equation}
\nabla_{X_a}[\omega_1,\omega_2]_{CKY}=\frac{1}{p+q}i_{X_a}d[\omega_1,\omega_2]_{CKY}
\end{equation}
and hence $[\omega_1,\omega_2]_{CKY}$ is an odd KY $(p+q-1)$-form for $p$ and $q$ are odd. This means that $*[\omega_1,\omega_2]_{CKY}$ is an odd CCKY $(n-(p+q-1))$-form in even $n$ dimensions. The covariant derivative of (77) gives
\begin{equation}
\nabla_{X_a}d[\omega_1,\omega_2]_{CKY}=c(p+q)e_a\wedge\left(-\frac{1}{n-p+1}\delta\omega_1\wedge\omega_2+\frac{1}{n-q+1}\omega_1\wedge\delta\omega_2\right)
\end{equation}
and we have
\begin{equation}
\nabla_{X_a}d[\omega_1,\omega_2]_{CKY}=-c(p+q)e_a\wedge[\omega_1,\omega_2]_{CKY}.
\end{equation}
So, $[\omega_1,\omega_2]_{CKY}$ is a special odd KY $(p+q-1)$-form and $*[\omega_1,\omega_2]_{CKY}$ is a special odd CCKY $(n-(p+q-1))$-form in even $n$ dimensions.

In all cases, Hodge star of the CKY bracket has the property $*[\omega_1,\omega_2]_{CKY}=-*[\omega_2,\omega_1]_{CKY}$ and satisfies the Jacobi identity. Hence, we prove that the special even KY forms and special odd CCKY forms satisfy a Lie algebra structure with respect to the bracket $*[\,,\,]_{CKY}$ in even dimensions such that two special even KY forms gives a special odd CCKY form, one special even KY form and one special odd CCKY form gives another special even KY form, two special odd CCKY forms give again a special odd CCKY form which is shown diagrammatically as
\begin{eqnarray}
 \textrm{special even KY}\times\textrm{special even KY}&\longrightarrow&\textrm{special odd CCKY}\nonumber\\
 \textrm{special even KY}\times\textrm{special odd CCKY}&\longrightarrow&\textrm{special even KY}\nonumber\\
 \textrm{special odd CCKY}\times\textrm{special odd CCKY}&\longrightarrow&\textrm{special odd CCKY}\nonumber
\end{eqnarray}
\end{proof}

In odd dimensions, special even KY forms and special odd CCKY forms do not satisfy a Lie algebra with respect to the bracket $*[\,,\,]_{CKY}$. In that case, two special even KY forms give a special even CCKY form, one special even KY form and one special odd CCKY form give a special odd KY form, two special odd CCKY forms give a special even CCKY form. So, special even KY forms and special odd CCKY forms does not close into an algebra with respect to $*[\,,\,]_{CKY}$. This is also true for the bracket $[\,,\,]_{CKY}$. Unlike the case of special odd KY and special even CCKY forms, the algebra structure of special even KY and special odd CCKY forms is only relevant in even dimensions.

\section{Symmetry Operators}

Killing spinors and KY forms are also related to each other in another way. For a Killing vector field $K$, the Lie derivative $\pounds_K$ acting on a spinor $\psi$ is defined as follows \cite{Kosmann}
\begin{equation}
\pounds_K\psi=\nabla_K\psi+\frac{1}{4}d\widetilde{K}.\psi
\end{equation}
where $\widetilde{K}$ is the 1-form which corresponds to the metric dual of $K$. It is known that $\pounds_K$ is a symmetry operator of the Killing spinor equation for all Killing vector fields $K$ \cite{Benn Tucker}. This means that if $\psi$ is a Killing spinor, then $\pounds_K\psi$ is also a Killing spinor, namely we have
\begin{equation}
\nabla_X\pounds_K\psi=\lambda\widetilde{X}.\pounds_K\psi
\end{equation}
for all vector fields $X$ and all Killing vector fields $K$. Moreover, if we generalize the operator in (82) to a KY $p$-form $\omega$ as
\begin{equation}
L_{\omega}=(i_{X^a}\omega).\nabla_{X_a}+\frac{p}{2(p+1)}d\omega
\end{equation}
and apply it to a Killing spinor $\psi$, we find
\begin{equation}
L_{\omega}\psi=(-1)^{p-1}\lambda p\omega.\psi+\frac{p}{2(p+1)}d\omega.\psi.
\end{equation}
If the degree $p$ of the KY $p$-form $\omega$ is odd, then we have
\begin{equation}
L_{\omega}\psi=\lambda p\omega.\psi+\frac{p}{2(p+1)}d\omega.\psi
\end{equation}
and it is shown in \cite{Ertem1} that this operator with respect to an odd KY $p$-form $\omega$ is also a symmetry operator for the Killing spinor equation in constant curvature manifolds. So, in that case we have
\begin{equation}
\nabla_XL_{\omega}\psi=\lambda\widetilde{X}.L_{\omega}\psi.
\end{equation}
The operator defined in (84) is a special case of the symmetry operator of twistor spinors in constant curvature manifolds written in terms of CKY $p$-forms $\omega$ as
\begin{equation}
L_{\omega}=(i_{X^a}\omega).\nabla_{X_a}+\frac{p}{2(p+1)}d\omega+\frac{p}{2(n-p+1)}\delta\omega
\end{equation}
which is proved in \cite{Ertem5, Ertem2}.

We show that this symmetry operator construction can be generalized to all manifolds admitting special KY and special CCKY forms.

\begin{theorem}
For every special odd KY $p$-form $\omega$ and every special even CCKY $p$-form $\omega$ generated by a Killing spinor with Killing number $\lambda$, the following operator
\begin{equation}
L_{\omega}=(i_{X^a}\omega).\nabla_{X_a}+\frac{p}{2(p+1)}d\omega+\frac{p}{2(n-p+1)}\delta\omega
\end{equation}
is a symmetry operator for the Killing spinor equation (11).
\end{theorem}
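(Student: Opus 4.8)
The plan is to evaluate $L_{\omega}\psi$ explicitly on a Killing spinor and then show the result is again a Killing spinor. First I would feed the Killing spinor equation (11) into the definition (89): since $\nabla_{X_{a}}\psi=\lambda e_{a}.\psi$, the derivative term becomes $(i_{X^{a}}\omega).\nabla_{X_{a}}\psi=\lambda\,(i_{X^{a}}\omega).e_{a}.\psi$, and the Clifford--wedge identities (18)--(19) together with $e^{a}\wedge i_{X_{a}}\omega=p\,\omega$ and $\sum_{a}i_{X_{a}}i_{X^{a}}\omega=0$ give the purely algebraic identity $\sum_{a}(i_{X^{a}}\omega).e_{a}=(-1)^{p-1}p\,\omega$. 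Hence, on a Killing spinor,
\[ L_{\omega}\psi=(-1)^{p-1}\lambda p\,\omega.\psi+\frac{p}{2(p+1)}d\omega.\psi+\frac{p}{2(n-p+1)}\delta\omega.\psi, \]
which is (85) completed by the $\delta\omega$-term. For a special odd KY $p$-form one has $\delta\omega=0$ and $p$ odd, so $L_{\omega}\psi=\lambda p\,\omega.\psi+\frac{p}{2(p+1)}d\omega.\psi$; for a special even CCKY $p$-form one has $d\omega=0$ and $p$ even, so $L_{\omega}\psi=-\lambda p\,\omega.\psi+\frac{p}{2(n-p+1)}\delta\omega.\psi$.

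Second, because $\nabla_{X}$ and $\widetilde{X}.(\,\cdot\,)$ are $C^{\infty}(M)$-linear in $X$, it suffices to check $\nabla_{X_{b}}L_{\omega}\psi=\lambda e_{b}.L_{\omega}\psi$ on the frame vectors $X_{b}$. I would apply $\nabla_{X_{b}}$ to the closed form above by the Leibniz rule and substitute, in the KY case, the first-order equation $\nabla_{X_{b}}\omega=\frac{1}{p+1}i_{X_{b}}d\omega$ from (15), the special-KY identity $\nabla_{X_{b}}d\omega=-c(p+1)\,e_{b}\wedge\omega$ from (29) with $c=-4\lambda^{2}$ (the value fixed in Proposition 2), and (11) once more on the surviving $\nabla_{X_{b}}\psi$; in the CCKY case, use instead $\nabla_{X_{b}}\omega=-\frac{1}{n-p+1}e_{b}\wedge\delta\omega$ from (17) and $\nabla_{X_{b}}\delta\omega=c(n-p+1)\,i_{X_{b}}\omega$ from (30), again with $c=-4\lambda^{2}$. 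On the right-hand side one expands $e_{b}.L_{\omega}\psi$ directly. Both sides then become Clifford products of $\psi$ with $e_{b}\wedge\omega$, $i_{X_{b}}\omega$, $e_{b}\wedge d\omega$, $i_{X_{b}}d\omega$ (or their $\delta\omega$-analogues), and the identity follows term by term after rewriting every Clifford product through (18)--(19): for instance in the KY case the $\omega$-terms cancel because $\lambda\big[\omega.e_{b}+2\,e_{b}\wedge\omega-e_{b}.\omega\big].\psi=0$ for $p$ odd, and the $d\omega$-terms cancel because $\frac{1}{p+1}i_{X_{b}}d\omega+\frac{1}{2(p+1)}\big(d\omega.e_{b}-e_{b}.d\omega\big)=0$; the CCKY case is the mirror image with $\delta\omega$ in place of $d\omega$ and the opposite parity.

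I expect the only real difficulty to be sign bookkeeping: tracking the $\eta$-signs attached to forms of degree $p$, $p\pm1$ in the two parity regimes, and --- crucially --- using the exact constant $c=-4\lambda^{2}$ so that the curvature term $\nabla_{X_{b}}d\omega$ (resp. $\nabla_{X_{b}}\delta\omega$) precisely cancels the $\lambda^{2}$-contributions coming from differentiating $\omega.\psi$ and from $e_{b}.(\omega.\psi)$. The degenerate case $\lambda=0$ (parallel spinor, $c=0$) is immediate, since then $L_{\omega}\psi=\frac{p}{2(p+1)}d\omega.\psi$ (resp. $\frac{p}{2(n-p+1)}\delta\omega.\psi$) and $\nabla_{X_{b}}d\omega=0$ (resp. $\nabla_{X_{b}}\delta\omega=0$). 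Once the two frame identities are established, $C^{\infty}(M)$-linearity in $X$ upgrades them to $\nabla_{X}L_{\omega}\psi=\lambda\widetilde{X}.L_{\omega}\psi$ for all $X$, i.e. $L_{\omega}\psi$ solves (11) with the same Killing number, which is the claim.
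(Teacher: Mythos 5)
Your proposal is correct and follows essentially the same route as the paper's proof: evaluate $L_{\omega}\psi$ on the Killing spinor using $(i_{X^a}\omega).e_a=(-1)^{p-1}p\,\omega$, differentiate, substitute (11), (15), (29) (resp. (17), (30)) with $c=-4\lambda^2$, and close the computation with the Clifford--wedge identities (18)--(19), exactly as in the paper. The explicit cancellation identities you cite for the $\omega$- and $d\omega$- (resp. $\delta\omega$-) terms are the same term-by-term verifications carried out in the paper's equations (91)--(97).
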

\begin{proof}
To prove the theorem, we have to show that for $\omega$ is a special odd KY $p$-form or a special even CCKY $p$-form and $\psi$ is a Killing spinor, the operator $L_{\omega}$ in (89) satisfies (87).

If $\omega$ is a special odd KY $p$-form, then we have $\delta\omega=0$ and the action of  (89) on a Killing spinor $\psi$ is written as
\begin{eqnarray}
L_{\omega}\psi&=&(i_{X^a}\omega).\nabla_{X_a}\psi+\frac{p}{2(p+1)}d\omega.\psi\nonumber\\
&=&\lambda(i_{X^a}\omega).e_a.\psi+\frac{p}{2(p+1)}d\omega.\psi\nonumber\\
&=&\lambda p\omega.\psi+\frac{p}{2(p+1)}d\omega.\psi
\end{eqnarray}
where we have used $(i_{X^a}\omega).e_a=e_a\wedge i_{X^a}\omega=p\omega$ since $p$ is odd and $i_{X_a}i_{X^a}=0$. By taking the covariant derivative $\nabla_{X_a}$ with respect to a basis $X_a$ of vector fields, we can calculate the left hand side of (87) as
\begin{eqnarray}
\nabla_{X_a}L_{\omega}\psi&=&\lambda p\left(\nabla_{X_a}\omega.\psi+\omega.\nabla_{X_a}\psi\right)+\frac{p}{2(p+1)}\left(\nabla_{X_a}d\omega.\psi+d\omega.\nabla_{X_a}\psi\right)\nonumber\\
&=&\lambda\frac{p}{p+1}i_{X_a}d\omega.\psi+\lambda^2p\omega.e_a.\psi-\frac{cp}{2}(e_a\wedge\omega).\psi\nonumber\\
&&+\lambda\frac{p}{2(p+1)}d\omega.e_a.\psi
\end{eqnarray}
where we have used (11), (15) and (20). Since the odd KY $p$-form $\omega$ is generated by a Killing spinor with the Killing number $\lambda$, we have $c=-4\lambda^2$ from (32). By using (23), we can write $\omega.e_a=-e_a\wedge\omega+i_{X_a}\omega$ and $d\omega.e_a=e_a\wedge d\omega-i_{X_a}d\omega$ since $\omega$ is an odd form. Then, we have
\begin{eqnarray}
\nabla_{X_a}L_{\omega}\psi&=&\lambda\frac{p}{p+1}i_{X_a}d\omega.\psi+\lambda^2p(-e_a\wedge\omega+i_{X_a}\omega).\psi\nonumber\\
&&+2\lambda^2p(e_a\wedge\omega).\psi+\lambda\frac{p}{2(p+1)}(e_a\wedge d\omega-i_{X_a}d\omega).\psi\nonumber\\
&=&\lambda^2p(e_a\wedge\omega+i_{X_a}\omega).\psi\nonumber\\
&&+\lambda\frac{p}{2(p+1)}(e_a\wedge d\omega+i_{X_a}d\omega).\psi
\end{eqnarray}
From (22) and the operator $L_{\omega}$ in (90), we obtain
\begin{eqnarray}
\nabla_{X_a}L_{\omega}\psi&=&\lambda^2pe_a.\omega.\psi+\lambda\frac{p}{2(p+1)}e_a.d\omega.\psi\nonumber\\
&=&\lambda e_a.L_{\omega}\psi
\end{eqnarray}
which is the equation (87) for the frame basis $X_a$ and co-frame basis $e^a$. So, $L_{\omega}$ defined in (89) with respect to a special odd KY form $\omega$ is a symmetry operator of the Killing spinor equation.

On the other hand, if $\omega$ is a special even CCKY $p$-form, then we have $d\omega=0$ and the action of  (89) on a Killing spinor $\psi$ is written as
\begin{eqnarray}
L_{\omega}\psi&=&(i_{X^a}\omega).\nabla_{X_a}\psi+\frac{p}{2(n-p+1)}\delta\omega.\psi\nonumber\\
&=&\lambda(i_{X^a}\omega).e_a.\psi+\frac{p}{2(n-p+1)}\delta\omega.\psi\nonumber\\
&=&-\lambda p\omega.\psi+\frac{p}{2(n-p+1)}\delta\omega.\psi
\end{eqnarray}
where we have used $(i_{X^a}\omega).e_a=-e_a\wedge i_{X^a}\omega=-p\omega$ since $p$ is even and $i_{X_a}i_{X^a}=0$. By taking the covariant derivative $\nabla_{X_a}$ with respect to a basis $X_a$ of vector fields, the left hand side of (87) gives
\begin{eqnarray}
\nabla_{X_a}L_{\omega}\psi&=&-\lambda p\left(\nabla_{X_a}\omega.\psi+\omega.\nabla_{X_a}\psi\right)\nonumber\\
&&+\frac{p}{2(n-p+1)}\left(\nabla_{X_a}\delta\omega.\psi+\delta\omega.\nabla_{X_a}\psi\right)\nonumber\\
&=&\lambda\frac{p}{n-p+1}(e_a\wedge\delta\omega).\psi-\lambda^2p\omega.e_a.\psi\nonumber\\
&&+\frac{cp}{2}i_{X_a}\omega.\psi+\lambda\frac{p}{2(n-p+1)}\delta\omega.e_a.\psi
\end{eqnarray}
where we have used (11), (17) and (21). Since the even CCKY $p$-form $\omega$ is generated by a Killing spinor with the Killing number $\lambda$, we have $c=-4\lambda^2$ from (34). By using (23), we can write $\omega.e_a=e_a\wedge\omega-i_{X_a}\omega$ and $\delta\omega.e_a=-e_a\wedge\delta\omega+i_{X_a}\delta\omega$ since $\omega$ is an even form. Then, we have
\begin{eqnarray}
\nabla_{X_a}L_{\omega}\psi&=&\lambda\frac{p}{n-p+1}(e_a\wedge\delta\omega).\psi-\lambda^2p(e_a\wedge\omega-i_{X_a}\omega).\psi\nonumber\\
&&-2\lambda^2pi_{X_a}\omega.\psi+\lambda\frac{p}{2(n-p+1)}(-e_a\wedge\delta\omega+i_{X_a}\delta\omega).\psi\\
&=&-\lambda^2p(e_a\wedge\omega+i_{X_a}\omega).\psi+\lambda\frac{p}{2(n-p+1)}(e_a\wedge\delta\omega+i_{X_a}\delta\omega).\psi\nonumber
\end{eqnarray}
From (22) and the operator $L_{\omega}$ in (94), we obtain
\begin{eqnarray}
\nabla_{X_a}L_{\omega}\psi&=&-\lambda^2pe_a.\omega.\psi+\lambda\frac{p}{2(n-p+1)}e_a.\delta\omega.\psi\nonumber\\
&=&\lambda e_a.L_{\omega}\psi
\end{eqnarray}
which is the equation (87) for the frame basis $X_a$ and co-frame basis $e^a$. So, $L_{\omega}$ defined in (89) with respect to a special even CCKY form $\omega$ is a symmetry operator of the Killing spinor equation.
\end{proof}

For the case of special even KY forms and special odd CCKY forms, we have
\begin{theorem}
For every special even KY $p$-form $\omega$ and every special odd CCKY $p$-form $\omega$ generated by a Killing spinor with Killing number $\lambda$, the following operator
\begin{equation}
K_{\omega}=(i_{X^a}\omega).z.\nabla_{X_a}+\frac{p}{2(p+1)}d\omega.z+\frac{p}{2(n-p+1)}\delta\omega.z
\end{equation}
is a symmetry operator for the Killing spinor equation (11) in even $n$ dimensions. Here $z$ is the volume form.
\end{theorem}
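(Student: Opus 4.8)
The plan is to follow the proof of Theorem 3 almost verbatim, the one genuinely new ingredient being the parallel volume form $z$. Two elementary facts about $z$ drive everything. First, $\nabla z = 0$, so $z$ slides freely through every covariant derivative; in particular $\nabla_{X_a}(z.\psi) = z.\nabla_{X_a}\psi$, whence $K_\omega\psi = L_\omega(z.\psi)$ with $L_\omega$ the operator (89) of Theorem 3. Second, in even $n$ dimensions $z$ Clifford-anticommutes with every $1$-form: applying (18) and (19) to the top form $z$ gives $z.e_a = (-1)^{n+1}e_a.z = -e_a.z$. Combining these, if $\psi$ is a Killing spinor with Killing number $\lambda$ then $z.\psi$ is a Killing spinor with Killing number $-\lambda$, and since a special KY (resp. special CCKY) form generated by $\psi$ satisfies (29) (resp. (30)) with $c = -4\lambda^2 = -4(-\lambda)^2$, the pair $(\omega,z.\psi)$ sits in exactly the algebraic configuration of Theorem 3 --- the sole difference being that $\omega$ here carries the complementary parity (even KY or odd CCKY in place of odd KY or even CCKY). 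So I would either rerun the Theorem 3 computation with this parity, or equivalently establish once that for such an $\omega$ the operator $L_\omega$ reverses the Killing number, then compose with the $z$-twist to come back to $+\lambda$.

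Concretely, for $\omega$ a special even KY $p$-form one has $\delta\omega = 0$, $d\omega$ of odd degree, and $c = -4\lambda^2$ as in the proof of Proposition 2. Using $\nabla_{X_a}\psi = \lambda e_a.\psi$, the identity $z.e_a = -e_a.z$, and $(i_{X^a}\omega).e_a = -p\omega$ for even $\omega$ (which follows from (19) since $i_{X^a}\omega$ is odd and $i_{X_a}i_{X^a}\omega = 0$), one reduces $K_\omega\psi$ to $\lambda p\,\omega.z.\psi + \frac{p}{2(p+1)}d\omega.z.\psi$. Then I apply $\nabla_{X_b}$, substitute the KY equation (15) and the special condition (29) with $c = -4\lambda^2$ for $\nabla_{X_b}\omega$ and $\nabla_{X_b}d\omega$, and (11) for $\nabla_{X_b}\psi$; after moving the surviving $e_b$ past $z$ (picking up the sign $-1$) and rewriting $\omega.e_b = e_b\wedge\omega - i_{X_b}\omega$ and $d\omega.e_b = -e_b\wedge d\omega + i_{X_b}d\omega$ via (19), the $e_b\wedge$ and $i_{X_b}$ pieces recombine through (18) into $\lambda e_b.\big(\lambda p\,\omega.z.\psi + \frac{p}{2(p+1)}d\omega.z.\psi\big) = \lambda e_b.K_\omega\psi$, which is precisely (87).

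The special odd CCKY case is the mirror image: now $d\omega = 0$, $\delta\omega$ has even degree, $(i_{X^a}\omega).e_a = p\omega$ for odd $\omega$, and $K_\omega\psi$ reduces to $-\lambda p\,\omega.z.\psi + \frac{p}{2(n-p+1)}\delta\omega.z.\psi$; differentiating with (17), (30) and (11), commuting $e_b$ through $z$, and using (19) for $\omega.e_b$ and $\delta\omega.e_b$, everything again collapses to $\lambda e_b.K_\omega\psi$. Together these two cases cover every form named in the statement.

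The main obstacle is not analytical but a disciplined sign count, and it is exactly the point where the even-dimensionality hypothesis becomes essential. A bare copy of the Theorem 3 operator, applied to $\psi$ itself, would now output a Killing spinor with the \emph{opposite} Killing number $-\lambda$, because the opposite parity of $\omega$ flips both the sign of $(i_{X^a}\omega).e_a$ and the relevant $\eta$-twists in (19); the volume form is inserted precisely to reabsorb this sign, and $z.e_a = (-1)^{n+1}e_a.z$ supplies the needed $-1$ only when $n$ is even. For $n$ odd, $z$ is central and the device fails --- consistent with the remark after Theorem 2 that special even KY and special odd CCKY forms do not even close under $*[\,,\,]_{CKY}$ in odd dimensions. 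So the real work is to check, term by term, that each sign produced by commuting $z$ through a $1$-form matches the sign deficit coming from the parity of $\omega$, with $\nabla z = 0$ guaranteeing that $z$ never interferes with the differentiations.
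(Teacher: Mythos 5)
Your proposal is correct and, in the form you actually carry it out (reducing $K_{\omega}\psi$ to $\lambda p\,\omega.z.\psi+\frac{p}{2(p+1)}d\omega.z.\psi$ resp. $-\lambda p\,\omega.z.\psi+\frac{p}{2(n-p+1)}\delta\omega.z.\psi$, then differentiating with (11), (15)/(17), (29)/(30), $c=-4\lambda^2$, $z.e_a=-e_a.z$ and the identities (18)--(19)), it is term by term the paper's own proof of this theorem. Your preliminary observation that $\nabla z=0$ gives $K_{\omega}\psi=L_{\omega}(z.\psi)$ with $z.\psi$ a Killing spinor of Killing number $-\lambda$ in even dimensions is also correct and a nice structural remark the paper does not make, but it is not needed for, and does not change, the computation you perform.
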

\begin{proof}
To prove the theorem, we need to show that for $\omega$ is a special even KY $p$-form or a special odd CCKY $p$-form and $\psi$ is a Killing spinor, the operator $K_{\omega}$ in (98) satisfies
\begin{equation}
\nabla_{X_a}K_{\omega}\psi=\lambda e_a.K_{\omega}\psi.
\end{equation}

If $\omega$ is a special even KY $p$-form, then we have $\delta\omega=0$ and (98) turns into
\begin{eqnarray}
K_{\omega}\psi&=&(i_{X^a}\omega).z.\nabla_{X_a}\psi+\frac{p}{2(p+1)}d\omega.z.\psi\nonumber\\
&=&\lambda(i_{X^a}\omega).z.e_a.\psi+\frac{p}{2(p+1)}d\omega.z.\psi
\end{eqnarray}
where we have used the Killing spinor equation (11). In even dimensions, the volume form $z$ anticommutes with all of the generators of the Clifford algebra and we have $z.e_a=-e_a.z$. Moreover, since $\omega$ is an even form, $i_{X^a}\omega$ is an odd form and we have $i_{X^a}\omega.e_a=-e_a\wedge i_{X^a}\omega=-p\omega$. So, we obtain
\begin{eqnarray}
K_{\omega}\psi&=&-\lambda(i_{X^a}\omega).e_a.z.\psi+\frac{p}{2(p+1)}d\omega.z.\psi\nonumber\\
&=&\lambda p\omega.z.\psi+\frac{p}{2(p+1)}d\omega.z.\psi.
\end{eqnarray}
To find the left hand side of (99), we can take the covariant derivative of (101) with respect to $X_a$ and this gives
\begin{eqnarray}
\nabla_{X_a}K_{\omega}\psi&=&\lambda p\left(\nabla_{X_a}\omega.z.\psi+\omega.z.\nabla_{X_a}\psi\right)\nonumber\\
&&+\frac{p}{2(p+1)}\left(\nabla_{X_a}d\omega.z.\psi+d\omega.z.\nabla_{X_a}\psi\right)\nonumber\\
&=&\lambda\frac{p}{p+1}i_{X_a}d\omega.z.\psi+\lambda^2p\omega.z.e_a.\psi-\frac{cp}{2}(e_a\wedge\omega).z.\psi\nonumber\\
&&+\lambda\frac{p}{2(p+1)}d\omega.z.e_a.\psi
\end{eqnarray}
where we have used the fact that $\nabla$ is metric compatible, so $\nabla_{X_a}z=0$ and the equations (11), (15) and (20). Since the even KY $p$-form $\omega$ is generated by a Killing spinor with the Killing number $\lambda$, we have $c=-4\lambda^2$ from (32). By using the relation $z.e_a=-e_a.z$ and the identities $\omega.e_a=e_a\wedge\omega-i_{X_a}\omega$ and $d\omega.e_a=-e_a\wedge d\omega+i_{X_a}d\omega$ resulting from (23) since $\omega$ is an even form, we obtain
\begin{eqnarray}
\nabla_{X_a}K_{\omega}\psi&=&\lambda\frac{p}{p+1}i_{X_a}d\omega.z.\psi-\lambda^2p\omega.e_a.z.\psi+2\lambda^2p(e_a\wedge\omega).z.\psi\nonumber\\
&&-\lambda\frac{p}{2(p+1)}d\omega.e_a.z.\psi\nonumber\\
&=&\lambda\frac{p}{p+1}i_{X_a}d\omega.z.\psi-\lambda^2p(e_a\wedge\omega-i_{X_a}\omega).z.\psi\nonumber\\
&&+2\lambda^2p(e_a\wedge\omega).z.\psi+\lambda\frac{p}{2(p+1)}(e_a\wedge d\omega-i_{X_a}d\omega).z.\psi\nonumber\\
&=&\lambda^2p(e_a\wedge\omega+i_{X_a}\omega).z.\psi+\lambda\frac{p}{2(p+1)}(e_a\wedge d\omega+i_{X_a}d\omega).z.\psi.\nonumber\\
\end{eqnarray}
From (22) and the operator $K_{\omega}$ in (101), we have
\begin{eqnarray}
\nabla_{X_a}K_{\omega}\psi&=&\lambda^2pe_a.\omega.z.\psi+\lambda\frac{p}{2(p+1)}e_a.d\omega.z.\psi\nonumber\\
&=&\lambda e_a.K_{\omega}\psi
\end{eqnarray}
which is the equation (99). So, $K_{\omega}$ defined in (98) with respect to a special even KY form $\omega$ is a symmetry operator of the Killing spinor equation in even dimensions.

On the other hand, if $\omega$ is a special odd CCKY $p$-form, then we have $d\omega=0$ and (98) turns into
\begin{eqnarray}
K_{\omega}\psi&=&(i_{X^a}\omega).z.\nabla_{X_a}\psi+\frac{p}{2(n-p+1)}\delta\omega.z.\psi\nonumber\\
&=&-\lambda(i_{X^a}\omega).e_a.z.\psi+\frac{p}{2(n-p+1)}\delta\omega.z.\psi
\end{eqnarray}
where we have used the Killing spinor equation (11) and the equality $z.e_a=-e_a.z$ in even dimensions. Moreover, since $\omega$ is an odd form, $i_{X^a}\omega$ is an even form and we have $i_{X^a}\omega.e_a=e_a\wedge i_{X^a}\omega=p\omega$. So, we obtain
\begin{eqnarray}
K_{\omega}\psi&=&-\lambda p\omega.z.\psi+\frac{p}{2(n-p+1)}\delta\omega.z.\psi.
\end{eqnarray}
To find the left hand side of (99), we can take the covariant derivative of (106) with respect to $X_a$ and this gives
\begin{eqnarray}
\nabla_{X_a}K_{\omega}\psi&=&-\lambda p\left(\nabla_{X_a}\omega.z.\psi+\omega.z.\nabla_{X_a}\psi\right)\nonumber\\
&&+\frac{p}{2(n-p+1)}\left(\nabla_{X_a}\delta\omega.z.\psi+\delta\omega.z.\nabla_{X_a}\psi\right)\nonumber\\
&=&\lambda\frac{p}{n-p+1}(e_a\wedge\delta\omega).z.\psi-\lambda^2p\omega.z.e_a.\psi+\frac{cp}{2}i_{X_a}\omega.z.\psi\nonumber\\
&&+\lambda\frac{p}{2(n-p+1)}\delta\omega.z.e_a.\psi
\end{eqnarray}
where we have used $\nabla_{X_a}z=0$ and the equations (11), (17) and (21). Since the odd CCKY $p$-form $\omega$ is generated by a Killing spinor with the Killing number $\lambda$, we have $c=-4\lambda^2$ from (32). By using the relation $z.e_a=-e_a.z$ and the identities $\omega.e_a=-e_a\wedge\omega+i_{X_a}\omega$ and $\delta\omega.e_a=e_a\wedge\delta\omega-i_{X_a}\delta\omega$ resulting from (23) since $\omega$ is an odd form, we obtain
\begin{eqnarray}
\nabla_{X_a}K_{\omega}\psi&=&\lambda\frac{p}{n-p+1}(e_a\wedge\delta\omega).z.\psi+\lambda^2p\omega.e_a.z.\psi-2\lambda^2pi_{X_a}\omega.z.\psi\nonumber\\
&&-\lambda\frac{p}{2(n-p+1)}\delta\omega.e_a.z.\psi\nonumber\\
&=&\lambda\frac{p}{n-p+1}(e_a\wedge\delta\omega).z.\psi-\lambda^2p(e_a\wedge\omega-i_{X_a}\omega).z.\psi\nonumber\\
&&-2\lambda^2pi_{X_a}\omega.z.\psi-\lambda\frac{p}{2(n-p+1)}(e_a\wedge\delta\omega-i_{X_a}\delta\omega).z.\psi\nonumber\\
&=&-\lambda^2p(e_a\wedge\omega+i_{X_a}\omega).z.\psi\nonumber\\
&&+\lambda\frac{p}{2(n-p+1)}(e_a\wedge\delta\omega+i_{X_a}\delta\omega).z.\psi.
\end{eqnarray}
From (22) and the operator $K_{\omega}$ in (106), we have
\begin{eqnarray}
\nabla_{X_a}K_{\omega}\psi&=&-\lambda^2pe_a.\omega.z.\psi+\lambda\frac{p}{2(n-p+1)}e_a.\delta\omega.z.\psi\nonumber\\
&=&\lambda e_a.K_{\omega}\psi
\end{eqnarray}
which is the equation (99). So, $K_{\omega}$ defined in (98) with respect to a special odd CCKY form $\omega$ is also a symmetry operator of the Killing spinor equation in even dimensions.
\end{proof}

In odd dimensions, the volume form $z$ is at the center of the Clifford algebra and hence commutes with all of the elements. So, we have $z.e_a=e_a.z$. However, in that case the operator defined in (98) with respect to special even KY and special odd CCKY forms does not satisfy the symmetry operator condition (99) for the Killing spinor equation (11). So, there is no companion of Theorem 4 in odd dimensions. This fact resembles the non-existence of the Lie algebra structure of special even KY and special odd CCKY forms in odd dimensions discussed in Section 4 and we will see that they are related to each other.

\section{Generalized Symmetry Superalgebras}

We will see in this section that the structures considered in the previous sections such as the bilinears of Killing spinors, the Lie algebra structure of special KY and special CCKY forms and the symmetry operators of Killing spinors constructed out of them together constitute a superalgebra structure.
\begin{definition}
A superalgebra $\mathfrak{g}$ is a $\mathbb{Z}_2$-graded algebra that consists of a direct sum of two components $\mathfrak{g}=\mathfrak{g}_0\oplus\mathfrak{g}_1$ on which a bilinear operation
\begin{equation}
[\,,\,]:\mathfrak{g}_i\times\mathfrak{g}_j\rightarrow\mathfrak{g}_{i+j}
\end{equation}
is defined. Here $i,j=0,1(\textrm{mod }2)$ and $[\,,\,]$ satisfies the condition
\begin{equation}
[a,b]=-(-1)^{|a||b|}[b,a]
\end{equation}
for $a,b\in\mathfrak{g}$ and $|\,|$ denotes the degree of the element which is 0 or 1 depending on that it is in $\mathfrak{g}_0$ or $\mathfrak{g}_1$. The first component $\mathfrak{g}_0$ is called the even part and the second component $\mathfrak{g}_1$ is called the odd part of the superalgebra. Moreover, if $[\,,\,]$ satisfies the following graded Jacobi identity
\begin{equation}
[a,[b,c]]=[[a,b],c]+(-1)^{|a||b|}[b,[a,c]]
\end{equation}
then $\mathfrak{g}$ is called a Lie superalgebra.
\end{definition}

By considering the Killing spinors and Killing vector fields corresponding to the Dirac currents of Killing spinors on a manifold $M$, a superalgebra structure called a symmetry superalgebra can be defined.

\begin{definition}
A superalgebra $\mathfrak{k}=\mathfrak{k}_0\oplus\mathfrak{k}_1$ on $M$ is called a symmetry superalgebra if its even part $\mathfrak{k}_0$ consists of the Lie algebra of the Killing vector fields on $M$ and the odd part $\mathfrak{k}_1$ corresponds to the space of Killing spinors on $M$. The bilinear operation is defined as follows. The even-even bracket corresponds to the Lie bracket of vector fields defined in (35)
\begin{equation}
[\,,\,]:\mathfrak{k}_0\times\mathfrak{k}_0\rightarrow\mathfrak{k}_0
\end{equation}
The even-odd bracket is the Lie derivative on spinor fields  with respect to Killing vector fields defined in (82)
\begin{equation}
\pounds:\mathfrak{k}_0\times\mathfrak{k}_1\rightarrow\mathfrak{k}_1
\end{equation}
and the odd-odd bracket is the Dirac current of a spinor defined in (7)
\begin{equation}
(\,\,)_1:\mathfrak{k}_1\times\mathfrak{k}_1\rightarrow\mathfrak{k}_0
\end{equation}
These brackets satisfy the property (111).
\end{definition}
The Jacobi identities of the symmetry superalgebra correspond to the properties of the Lie bracket of vector fields and the Lie derivative on spinors. The even-even-even Jacobi identity is automatically satisfied since it is the Jacobi identity for the Lie bracket of vector fields
\begin{equation}
[K_1,[K_2,K_3]]+[K_2,[K_3,K_1]]+[K_3,[K_1,K_2]]=0
\end{equation}
where $K_1, K_2, K_3 \in\mathfrak{k}_0$. The even-even-odd and even-odd-odd Jacobi identites correspond to the following properties of the Lie derivative on spinors and are automatically satisfied
\begin{eqnarray}
[\pounds_{K_1},\pounds_{K_2}]\psi&=&\pounds_{[K_1,K_2]}\psi\\
{\mathcal{L}}_K(\psi\overline{\phi})&=&(\pounds_K\psi)\overline{\phi}+\psi(\overline{\pounds_K\phi})
\end{eqnarray}
where $K, K_1, K_2\in\mathfrak{k}_0$, $\psi, \phi\in\mathfrak{k}_1$ and ${\mathcal{L}}$ is the Lie derivative on Clifford forms. However, the odd-odd-odd Jacobi identity for $\psi\in\mathfrak{k}_1$
\begin{equation}
\pounds_{V_{\psi}}\psi=0
\end{equation}
is not automatically satisfied and in the cases that it is satisfied the symmetry superalgebra is a Lie superalgebra.

The construction of the symmetry superalgebra can be extended to include the odd degree KY forms \cite{Ertem1,Ertem2}.

\begin{definition}
A superalgebra $\overline{\mathfrak{k}}=\overline{\mathfrak{k}}_0\oplus\overline{\mathfrak{k}}_1$ on $M$ is called an extended symmetry superalgebra if its even part $\overline{\mathfrak{k}}_0$ consists of the Lie algebra of the odd degree KY forms on $M$ and the odd part $\mathfrak{k}_1$ corresponds to the space of Killing spinors on $M$. The bilinear operation is defined as follows. The even-even bracket corresponds to the SN bracket of KY forms defined in (39)
\begin{equation}
[\,,\,]_{SN}:\overline{\mathfrak{k}}_0\times\overline{\mathfrak{k}}_0\rightarrow\overline{\mathfrak{k}}_0
\end{equation}
The even-odd bracket is the symmetry operators of Killing spinors constructed out of odd KY forms defined in (86)
\begin{equation}
L:\overline{\mathfrak{k}}_0\times\overline{\mathfrak{k}}_1\rightarrow\overline{\mathfrak{k}}_1
\end{equation}
and the odd-odd bracket is the $p$-form Dirac currents of Killing spinors defined in (8)
\begin{equation}
(\,\,)_p:\overline{\mathfrak{k}}_1\times\overline{\mathfrak{k}}_1\rightarrow\overline{\mathfrak{k}}_0
\end{equation}
These brackets satisfy the property (111).
\end{definition}
Although the even-even-even Jacobi identity of the extended symmetry superalgebra is automatically satisfied since it corresponds to the Jacobi identity for the SN bracket of odd degree forms, other Jacobi identities are not automatically satisfied and in general the extended symmetry superalgebra is not a Lie superalgebra.

We generalize the construction of symmetry and extended symmetry superalgebras to include all special KY and special CCKY forms generated by Killing spinors on the manifold $M$. By using Proposition 1, Theorem 1 and Theorem 3, the generalized symmetry superalgebras are constructed as in the following
\begin{theorem}
On an $n$-dimensional manifold $M$ admitting Killing spinors, if the bilinears of Killing spinors correspond to the special odd KY and special even CCKY forms, then the space of Killing spinors and special odd KY and special even CCKY forms generated by them constitute a superalgebra structure which we denote by $\mathfrak{K}=\mathfrak{K}_0\oplus\mathfrak{K}_1$. The even part $\mathfrak{K}_0$ of the superalgebra is the Lie algebra of the special odd KY and special even CCKY forms with respect to the CKY bracket and the odd part $\mathfrak{K}_1$ is the space of Killing spinors. This superalgebra is called the generalized symmetry superalgebra and the bilinear operation is defined as follows. The even-even bracket is the CKY bracket of the special odd KY and special even CCKY forms defined in (48)
\begin{equation}
[\,,\,]_{CKY}:\mathfrak{K}_0\otimes\mathfrak{K}_0\rightarrow\mathfrak{K}_0
\end{equation}
The even-odd bracket is the symmetry operators of Killing spinors generated from the special odd KY and special even CCKY forms defined in (89)
\begin{equation}
L:\mathfrak{K}_0\otimes\mathfrak{K}_1\rightarrow\mathfrak{K}_1
\end{equation}
The odd-odd bracket is the $p$-form bilinears of Killing spinors defined in (8)
\begin{equation}
(\,\,)_p:\mathfrak{K}_1\otimes\mathfrak{K}_1\rightarrow\mathfrak{K}_0
\end{equation}
These brackets satisfy the condition (111) and hence constitute a superalgebra structure.
\end{theorem}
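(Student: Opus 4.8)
The plan is to verify directly the two defining properties of a superalgebra (Definition 7) for the three brackets (124)--(126): that each respects the $\mathbb{Z}_2$-grading, and that together they obey the graded antisymmetry condition (111). Since Theorem 5 asserts only a superalgebra and not a Lie superalgebra, no Jacobi identities need be checked, so the argument is purely structural and amounts to assembling results already established.

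First I would establish closure of each bracket. For the even--even bracket this is exactly Theorem 1, which says that the special odd KY forms together with the special even CCKY forms close under $[\,,\,]_{CKY}$ according to the three type combinations displayed at the end of its proof; hence $[\,,\,]_{CKY}:\mathfrak{K}_0\otimes\mathfrak{K}_0\to\mathfrak{K}_0$. I would also record, by tracking the constant through the computations in that proof, that the bracket of two forms carrying the constant $c=-4\lambda^2$ again carries $c=-4\lambda^2$, so that the output is of the precise kind to which Theorem 3 applies. For the even--odd bracket, Theorem 3 gives that $L_\omega\psi$ is again a Killing spinor whenever $\omega$ is a special odd KY or special even CCKY form generated by a Killing spinor of Killing number $\lambda$; since every element of $\mathfrak{K}_0$ is such a form with this same $\lambda$, we get $L:\mathfrak{K}_0\otimes\mathfrak{K}_1\to\mathfrak{K}_1$. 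For the odd--odd bracket, the hypothesis of Theorem 5 is precisely that the spinor bilinears are special odd KY and special even CCKY forms, so Propositions 1 and 2 guarantee that for each $p$ the symmetric combination $(\psi\overline{\phi})_p+(\phi\overline{\psi})_p$ is one of these forms, giving $(\,\,)_p:\mathfrak{K}_1\otimes\mathfrak{K}_1\to\mathfrak{K}_0$. Bilinearity of all three maps is immediate from their defining formulas.

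Next I would check the graded antisymmetry (111). For two even elements it reduces to ordinary antisymmetry $[\omega_1,\omega_2]_{CKY}=-[\omega_2,\omega_1]_{CKY}$, which is part of the conclusion of Theorem 1. For two odd elements (111) demands symmetry, and the odd--odd bracket is the manifestly symmetric combination $(\psi\overline{\phi})_p+(\phi\overline{\psi})_p$. For a mixed pair I would define the odd--even bracket by $[\psi,\omega]:=-L_\omega\psi$; since $|\omega|\,|\psi|=0$, the identity $[\omega,\psi]=-(-1)^{0}[\psi,\omega]$ then holds by construction, and this is the only choice consistent with (111). Combining the three closure statements with these (anti)symmetry checks exhibits $\mathfrak{K}=\mathfrak{K}_0\oplus\mathfrak{K}_1$, with brackets (124)--(126), as a superalgebra in the sense of Definition 7.

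I do not expect a deep obstacle here: the substantive content is already packaged into Theorem 1 (closure and antisymmetry of the even part under the CKY bracket), Theorem 3 (the symmetry operator preserves the space of Killing spinors), and Propositions 1--2 (the spinor bilinears land in the even part). The one point needing some care is the uniformity of the constant $c=-4\lambda^2$ across all of $\mathfrak{K}_0$, which is what permits Theorem 3 to be applied not merely to bilinears of a single Killing spinor but to every element of $\mathfrak{K}_0$, including CKY brackets of such bilinears; this uniformity follows by inspecting how $c$ propagates through the proof of Theorem 1.
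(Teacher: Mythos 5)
Your proposal is correct and follows essentially the same route as the paper: the proof there is exactly an assembly of Theorem 1 (closure of $\mathfrak{K}_0$ under $[\,,\,]_{CKY}$), Theorem 3 (the operators $L_\omega$ preserve the space of Killing spinors), and Proposition 1 (the spinor bilinears land in $\mathfrak{K}_0$), with the condition (111) noted briefly. Your additional remarks --- the explicit check of graded (anti)symmetry and the observation that the constant $c=-4\lambda^2$ propagates through the CKY bracket so that Theorem 3 applies to all of $\mathfrak{K}_0$ --- are sound refinements of the same argument rather than a different approach.
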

\begin{proof}
As we proved in Theorem 1, the CKY bracket $[\,,\,]_{CKY}$ is a Lie bracket for special odd KY and special even CCKY forms and $\mathfrak{K}_0$ is a Lie algebra with respect to the bracket in (123). We have also proved in Theorem 3 that the operator (89) constructed out of special odd KY forms and special even CCKY forms are symmetry operators for Killing spinors. Hence, the bracket given in (124) is well-defined for $\mathfrak{K}_0$ and $\mathfrak{K}_1$. Moreover, we have also shown in Proposition 1 that the symmetric combination of $p$-form bilinears of two Killing spinors correspond to special odd KY and special even CCKY forms in relevant situations. So, the bracket defined in (125) is also well-defined for $\mathfrak{K}_0$ and $\mathfrak{K_1}$. As a consequence, $\mathfrak{K}=\mathfrak{K}_0\oplus\mathfrak{K}_1$ constitute a consistent superalgebra structure.
\end{proof}

On the other hand, we can also construct a superalgebra in even dimensions for the case of special even KY forms and special odd CCKY forms generated by the Killing spinors.
\begin{theorem}
On an even $n$-dimensional manifold $M$ admitting Killing spinors, if the bilinears of Killing spinors correspond to the special even KY and special odd CCKY forms, then the space of Killing spinors and special even KY and special odd CCKY forms generated by them constitute a superalgebra structure which we denote by $\overline{\mathfrak{K}}=\overline{\mathfrak{K}}_0\oplus\overline{\mathfrak{K}}_1$. The even part $\overline{\mathfrak{K}}_0$ of the superalgebra is the Lie algebra of the special even KY and special odd CCKY forms with respect to the CKY bracket and the odd part $\overline{\mathfrak{K}}_1$ is the space of Killing spinors. This superalgebra is called the generalized symmetry superalgebra and the bilinear operation is defined as follows. The even-even bracket is the Hodge star of the CKY bracket of the special even KY and special odd CCKY forms defined in Theorem 2
\begin{equation}
*[\,,\,]_{CKY}:\overline{\mathfrak{K}}_0\otimes\overline{\mathfrak{K}}_0\rightarrow\overline{\mathfrak{K}}_0
\end{equation}
The even-odd bracket is the symmetry operators of Killing spinors generated from the special even KY and special odd CCKY forms defined in (98)
\begin{equation}
K:\overline{\mathfrak{K}}_0\otimes\overline{\mathfrak{K}}_1\rightarrow\overline{\mathfrak{K}}_1
\end{equation}
The odd-odd bracket is the $p$-form bilinears of Killing spinors defined in (8)
\begin{equation}
(\,\,)_p:\overline{\mathfrak{K}}_1\otimes\overline{\mathfrak{K}}_1\rightarrow\overline{\mathfrak{K}}_0
\end{equation}
These brackets satisfy the condition (111) and hence constitute a superalgebra structure in even dimensions.
\end{theorem}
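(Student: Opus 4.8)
The proof will parallel that of Theorem 5, assembling the three brackets from results already established and checking only that each one lands in the correct $\mathbb{Z}_2$-graded component and obeys the graded antisymmetry (111); no Jacobi identities are needed since we claim merely a superalgebra, not a Lie superalgebra (and, as remarked in the introduction, these structures are generically not Lie superalgebras). The plan is to treat the even--even, even--odd and odd--odd brackets in turn, in each case quoting the relevant earlier statement and being careful about where the even-dimensionality is used.

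First I would handle the even--even bracket $*[\,,\,]_{CKY}$ of (127): Theorem 2 shows precisely that in even dimensions the special even KY forms together with the special odd CCKY forms close under $*[\,,\,]_{CKY}$ into a Lie algebra, so $\overline{\mathfrak{K}}_0$ is well defined as stated, the bracket maps $\overline{\mathfrak{K}}_0\otimes\overline{\mathfrak{K}}_0\rightarrow\overline{\mathfrak{K}}_0$, and the identity $*[\omega_1,\omega_2]_{CKY}=-*[\omega_2,\omega_1]_{CKY}$ recorded there is exactly (111) for two degree-zero elements. The even-dimensionality enters here twice: Hodge duality shifts degree by $n-p$, so the odd KY $(p+q-1)$-form produced by $[\,,\,]_{CKY}$ has an odd CCKY Hodge dual only when $n$ is even, and Theorem 2 itself is proved only in that case.

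Next I would treat the even--odd bracket $K$ of (128): Theorem 4 guarantees that for every special even KY $p$-form and every special odd CCKY $p$-form $\omega$ generated by a Killing spinor of Killing number $\lambda$ the operator $K_{\omega}$ of (98) is a symmetry operator of the Killing spinor equation in even $n$ dimensions, hence $K_{\omega}\psi\in\overline{\mathfrak{K}}_1$ whenever $\psi\in\overline{\mathfrak{K}}_1$, so (128) is well defined and its reverse is fixed by (111). For the odd--odd bracket (129) the hypothesis of the theorem is exactly that the $p$-form bilinears of Killing spinors are special even KY or special odd CCKY forms, and Propositions 1 and 2 show that the symmetric combination $(\psi\overline{\phi})_p+(\phi\overline{\psi})_p$ of two Killing spinors is then such a form for the relevant parities of $p$ (an even $p$ yielding a special even KY form precisely when an odd $p$ yields a special odd CCKY form), so (129) indeed maps $\overline{\mathfrak{K}}_1\otimes\overline{\mathfrak{K}}_1\rightarrow\overline{\mathfrak{K}}_0$, and its manifest symmetry under $\psi\leftrightarrow\phi$ matches (111) for two degree-one elements. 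With all three brackets well defined and graded-antisymmetric, $\overline{\mathfrak{K}}=\overline{\mathfrak{K}}_0\oplus\overline{\mathfrak{K}}_1$ is a superalgebra in even dimensions.

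The main obstacle, and the reason the statement is confined to even $n$, is not a step of the assembly itself but the fact that two of the ingredients degenerate in odd dimensions: as noted after Theorem 2 the special even KY and special odd CCKY forms fail to close under $*[\,,\,]_{CKY}$, and as noted after Theorem 4 the operator $K_{\omega}$ ceases to be a symmetry operator because the volume form $z$ becomes central in the Clifford algebra, so neither the even--even nor the even--odd bracket would be well defined. The delicate part of the write-up is therefore to track the even-dimensional hypothesis consistently through the invocations of Theorems 2 and 4 and to point out that it is an essential input, not a technical convenience.
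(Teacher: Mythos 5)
Your proposal is correct and follows essentially the same route as the paper: the even--even bracket is justified by Theorem 2, the even--odd bracket by Theorem 4, the odd--odd bracket by Proposition 1 (with Proposition 2 supplying specialty), and graded antisymmetry (111) is checked case by case, with the even-dimensional hypothesis entering exactly through Theorems 2 and 4. Your added emphasis on why odd dimensions fail matches the paper's own remarks surrounding the theorem, so nothing essential is missing or different.
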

\begin{proof}
As we have shown in Theorem 2, the Hodge star of the CKY bracket $*[\,,\,]_{CKY}$ is a Lie bracket for special even KY and special odd CCKY forms in even dimensions and $\overline{\mathfrak{K}}_0$ is a Lie algebra with respect to the bracket in (126). We have also shown in Theorem 4 that the operator (98) constructed out of special even KY forms and special odd CCKY forms are symmetry operators for Killing spinors in even dimensions. Hence, the bracket given in (127) is well-defined for $\overline{\mathfrak{K}}_0$ and $\overline{\mathfrak{K}}_1$. Moreover, we have also shown in Proposition 1 that the symmetric combination of $p$-form bilinears of two Killing spinors correspond to special even KY and special odd CCKY forms in relevant situations. So, the bracket defined in (128) is also well-defined for $\overline{\mathfrak{K}}_0$ and $\overline{\mathfrak{K}}_1$. As a consequence, $\overline{\mathfrak{K}}=\overline{\mathfrak{K}}_0\oplus\overline{\mathfrak{K}}_1$ constitute a consistent superalgebra structure in even dimensions.
\end{proof}

Note that there is no superalgebra structure for special even KY and special odd CCKY forms generated by Killing spinors in odd dimensions. The reason for this is that the bracket $*[\,,\,]_{CKY}$ is not a Lie bracket and the operator $K$ is not a symmetry operator in odd dimensions.

The even-even-even Jacobi identities of the superalgebra structures constructed in Theorems 5 and 6 are automatically satisfied because of the properties of $[\,,\,]_{CKY}$ and $*[\,,\,]_{CKY}$ brackets. However, other Jacobi identities are not automatically satisfied and the superalgebras does not correspond to Lie superalgebras in general.

\section{Examples}

As examples for the generalized symmetry superalgebras constructed in Section 6, we consider special types of 6 and 7-dimensional compact Riemannian manifolds called nearly K\"{a}hler and weak $G_2$ manifolds \cite{Joyce,Besse}.

\subsection{Weak $G_2$ Manifolds}

\begin{definition}
A 7-dimensional compact Riemannian manifold $M$ is called a weak $G_2$ manifold, if its metric cone has holonomy contained in $Spin(7)$. $M$ admits a Killing spinor $\psi$ satisfying the Killing spinor equation (11)
\[
\nabla_{X}\psi=\lambda\widetilde{X}.\psi
\]
for any vector field $X$ and the Killing number $\lambda$. This implies that $M$ is an Einstein manifold \cite{Friedrich Kath Moroianu Semmelmann}.
\end{definition}

For a 7-dimensional Riemannian manifold $M$, the Clifford algebra defined on the tangent bundle $TM$ is isomorphic to $Cl_{7,0}\cong\mathbb{C}(8)$ where $\mathbb{C}(8)$ is the space of $8\times 8$ dimensional complex matrices. The even subalgebra of a real Clifford algebra can be found from the one lower dimensional Clifford algebra by the relation $Cl^0_{p+1,q}\cong Cl_{q,p}$ for any $p$ and $q$. So, the even subalgebra for the 7-dimensional Riemannian case is $Cl^0_{7,0}\cong Cl_{0,6}\cong\mathbb{R}(8)$. The space carrying the irreducible representations of the even subalgebra corresponds to the spinor space $S$ and in our case we have $S\cong\mathbb{R}^8$. Then, spinors are real and correspond to Majorana spinors which means that the Killing spinor $\psi$ is a real Killing spinor and the Killing number $\lambda$ is a real number. The spin invariant inner product defined on $S$ is a $\mathbb{R}$-symmetric inner product with $\xi\eta$ involution. So, we have ${\mathcal{J}}=\xi\eta$, $j=\textrm{Id}$ and the following properties for the inner product
\begin{eqnarray}
(\psi,\phi)&=&(\phi,\psi)\\
(\psi,\alpha.\phi)&=&(\alpha^{\xi\eta}.\psi,\phi)
\end{eqnarray}
where $\psi,\phi\in S$ and $\alpha\in Cl(M)$.

Now, we can calculate the $p$-form Dirac currents of the Killing spinor $\psi$;
\begin{eqnarray}
(\psi\overline{\psi})_0&=&(\psi,\psi)\neq 0\nonumber\\
(\psi\overline{\psi})_1&=&(\psi,e_a.\psi)e^a=(e_a^{\xi\eta}.\psi,\psi)e^a\nonumber\\
&=&-(e_a.\psi,\psi)e^a=-(\psi,e_a.\psi)e^a=0\nonumber\\
(\psi\overline{\psi})_2&=&(\psi,e_{ba}.\psi)e^{ab}=(e_{ba}^{\xi\eta}.\psi,\psi)e^{ab}\nonumber\\
&=&-(e_{ba}.\psi,\psi)e^{ab}=-(\psi,e_{ba}.\psi)e^{ab}=0\nonumber\\
(\psi\overline{\psi})_3&=&(\psi,e_{cba}.\psi)e^{abc}=(e_{cba}^{\xi\eta}.\psi,\psi)e^{abc}\nonumber\\
&=&(e_{cba}.\psi,\psi)e^{abc}=(\psi,e_{cba}.\psi)e^{abc}\neq 0\nonumber
\end{eqnarray}
and by similar reasoning, we also have
\begin{eqnarray}
(\psi\overline{\psi})_4&\neq&0\nonumber\\
(\psi\overline{\psi})_5&=&0\nonumber\\
(\psi\overline{\psi})_6&=&0\nonumber\\
(\psi\overline{\psi})_7&\neq&0\nonumber.
\end{eqnarray}
From Proposition 1, one can conclude that we have two non-zero special odd KY forms $(\psi\overline{\psi})_3$ and $(\psi\overline{\psi})_7$ and two non-zero special even CCKY forms $(\psi\overline{\psi})_0$ and $(\psi\overline{\psi})_4$. So, they are Hodge duals of each other; $(\psi\overline{\psi})_3=*(\psi\overline{\psi})_4$ and $(\psi\overline{\psi})_0=*(\psi\overline{\psi})_7$. Moreover, since we have $d(\psi\overline{\psi})_7=0$ and $\delta(\psi\overline{\psi})_0=0$, the bilinears $(\psi\overline{\psi})_0$ and $(\psi\overline{\psi})_7$ correspond to parallel forms and we can choose as $(\psi\overline{\psi})_0=1$ and $(\psi\overline{\psi})_7=z$ where $z$ is the volume form. From the equations (23) and (25), $(\psi\overline{\psi})_3$ and $(\psi\overline{\psi})_4$ satisfy the following equalities
\begin{eqnarray}
d(\psi\overline{\psi})_3&=&8\lambda(\psi\overline{\psi})_4\\
\delta(\psi\overline{\psi})_4&=&-8\lambda(\psi\overline{\psi})_3.
\end{eqnarray}

We can check the superalgebra structure formed by the Killing spinor $\psi$ and the special odd KY and special even CCKY forms generated by $\psi$. From the definition of the CKY bracket in (48), one can check that the only non-vanishing brackets of the $p$-form Dirac currents are
\begin{eqnarray}
{[(\psi\overline{\psi})_0,(\psi\overline{\psi})_{4}]}_{CKY}&=&-2\lambda(\psi\overline{\psi})_3\nonumber\\
{[(\psi\overline{\psi})_4,(\psi\overline{\psi})_{4}]}_{CKY}&=&k(\psi\overline{\psi})_7
\end{eqnarray}
where $k$ is a constant and all other brackets vanish.

To construct the symmetry operators defined in (89), we first consider the algebraic relations between spinor bilinears. From (6), we have
\[
\psi\overline{\psi}=(\psi\overline{\psi})_0+(\psi\overline{\psi})_3+(\psi\overline{\psi})_4+(\psi\overline{\psi})_7
\]
and we can write
\begin{eqnarray}
(\psi\overline{\psi})_3.\psi+(\psi\overline{\psi})_4.\psi+(\psi\overline{\psi})_7.\psi&=&(\psi\overline{\psi})\psi-(\psi\overline{\psi})_0.\psi\nonumber\\
&=&(\psi,\psi)\psi-(\psi\overline{\psi})_0.\psi\nonumber\\
&=&(\psi\overline{\psi})_0.\psi-(\psi\overline{\psi})_0.\psi\nonumber\\
&=&0
\end{eqnarray}
where we have used (5) and the definition $(\psi\overline{\psi})_0=( \psi,\psi)$. In a 7-dimensional Riemannian manifold, the volume form has the property $z^2=-1$. This implies that $z^2.\psi=-\psi$ and hence $z.\psi=\pm i\psi$. So, we have $(\psi\overline{\psi})_7.\psi=z.\psi=\pm i\psi$ and from (134), we can write
\begin{equation}
(\psi\overline{\psi})_3.\psi+(\psi\overline{\psi})_4.\psi=-(\psi\overline{\psi})_7.\psi=\mp i\psi.
\end{equation}
This means that we have the following identities
\begin{eqnarray}
(\psi\overline{\psi})_3.\psi&=&\frac{\mp i}{1\mp i}\psi\\
(\psi\overline{\psi})_4.\psi&=&-\frac{1}{1\mp i}\psi
\end{eqnarray}
where we have used that $(\psi\overline{\psi})_4.\psi=*(\psi\overline{\psi})_3.\psi=(\psi\overline{\psi})_3^{\xi}.z.\psi=-(\psi\overline{\psi})_3.z.\psi$. By using these identities, we can construct the symmetry operators defined in (89) as follows
\begin{eqnarray}
L_{(\psi\overline{\psi})_0}\psi&=&0\nonumber\\
L_{(\psi\overline{\psi})_3}\psi&=&\mp 3i\lambda\psi\nonumber\\
L_{(\psi\overline{\psi})_4}\psi&=&-4\lambda\psi\\
L_{(\psi\overline{\psi})_7}\psi&=&\pm 7i\lambda\psi\nonumber
\end{eqnarray}
where we have used the identities $d(\psi\overline{\psi})_3=8\lambda(\psi\overline{\psi})_4$ and $\delta(\psi\overline{\psi})_4=-8\lambda(\psi\overline{\psi})_3$ given in (27) and (29) with the closedness and co-closedness properties of CCKY and KY forms respectively.

So, we have constructed all the brackets of the superalgebra defined in Theorem 5 for a 7-dimensional weak $G_2$ manifold. To see that whether it is also a Lie superalgebra, we need to check the Jacobi identities. The first Jacobi identity is the the Jacobi identity for the CKY bracket which is automatically satisfied. The second Jacobi identity is the vanishing of the commutators of symmetry operators given by
\begin{equation}
[L_{(\psi\overline{\psi})_i},L_{(\psi\overline{\psi})_j}]=0
\end{equation}
and it can be seen from (138) that it is satisfied for all cases of $i,j=0,3,4,7$. However, the third and fourth Jacobi identities given by
\begin{eqnarray}
[(\psi\overline{\psi})_i,(\psi\overline{\psi})]_{CKY}&=&(L_{(\psi\overline{\psi})_i}\psi)\overline{\psi}+\psi(\overline{L_{(\psi\overline{\psi})_i}\psi})\\
L_{(\psi\overline{\psi})}\psi&=&0
\end{eqnarray}
are not satisfied for all cases. For example, we have
\begin{equation}
L_{(\psi\overline{\psi})}\psi=-4\lambda(1\mp i)\psi
\end{equation}
which is not zero. So, the generalized symmetry superalgebra of a weak $G_2$ manifold is a superalgebra but not a Lie superalgebra.

\subsection{Nearly K\"{a}hler Manifolds}

\begin{definition}
A Riemannian manifold $M$ with metric $g$ and almost complex structure $J$ is called a nearly K\"{a}hler manifold if $J$ satisfies the condition
\begin{equation}
i_X\nabla_XJ=0
\end{equation}
for every vector field $X\in TM$. The metric cone of a 6-dimensional nearly K\"{a}hler manifold $M$ has holonomy in $G_2$ and $M$ admits two Killing spinors with different chirality \cite{Joyce,Besse}.
\end{definition}

In a 6-dimensional Riemannian manifold $M$, the Clifford algebra is isomorphic to $Cl_{6,0}\cong\mathbb{H}(4)$ where $\mathbb{H}(4)$ is the space of $4\times 4$ dimensional quaternionic matrices. The even subalgebra corresponds to $Cl^0_{6,0}\cong Cl_{0,5}\cong\mathbb{C}(4)$. There are two different irreducible representations of the even subalgebra and the spinor space $S=S_+\oplus S_-$ is equivalent to $S_+\oplus S_-\cong\mathbb{C}^4\oplus\mathbb{C}^4$. Then, spinors are chiral complex and correspond to Dirac-Weyl spinors. The Killing spinors $\psi^+\in S_+$ and $\psi^-\in S_-$ on $M$ are real Killing spinors and the Killing numbers $\lambda_+$ and $\lambda_-$ are both real numbers. The spin invariant inner product defined on $S$ is a $\mathbb{C}$-skew inner product with $\xi$ involution. So, we have ${\mathcal{J}}=\xi$, $j=\textrm{Id}$ and the following properties for the inner product
\begin{eqnarray}
(\psi,\phi)&=&-(\phi,\psi)\\
(\psi,\alpha.\phi)&=&(\alpha^{\xi}.\psi,\phi)
\end{eqnarray}
where $\psi,\phi\in S$ and $\alpha\in Cl(M)$.

We can find the $p$-form Dirac currents of the Killing spinor $\psi^+$ as follows
\begin{eqnarray}
(\psi^+\overline{\psi^+})_0&=&-(\psi^+,\psi^+)=0\nonumber\\
(\psi^+\overline{\psi^+})_1&=&(\psi^+,e_a.\psi^+)e^a=(e_a^{\xi}.\psi^+,\psi^+)e^a\nonumber\\
&=&(e_a.\psi^+,\psi^+)e^a=-(\psi^+,e_a.\psi^+)e^a=0\nonumber\\
(\psi^+\overline{\psi^+})_2&=&(\psi^+,e_{ba}.\psi^+)e^{ab}=(e_{ba}^{\xi}.\psi^+,\psi^+)e^{ab}\nonumber\\
&=&-(e_{ba}.\psi^+,\psi^+)e^{ab}=(\psi^+,e_{ba}.\psi^+)e^{ab}\neq 0
\end{eqnarray}
and by similar reasoning, we also have
\begin{eqnarray}
(\psi^+\overline{\psi^+})_3&\neq&0\nonumber\\
(\psi^+\overline{\psi^+})_4&=&0\nonumber\\
(\psi^+\overline{\psi^+})_5&=&0\nonumber\\
(\psi^+\overline{\psi^+})_6&\neq&0\nonumber.
\end{eqnarray}
The same equalities are also true for the other Killing spinor $\psi^-$. By considering the cases for ${\mathcal{J}}=\xi$, $\lambda$ is real and $j=\textrm{Id}$ in Proposition 1, we reach to the fact that the even degree bilinears $(\psi^+\overline{\psi^+})_2$, $(\psi^+\overline{\psi^+})_6$, $(\psi^-\overline{\psi^-})_2$ and $(\psi^-\overline{\psi^-})_6$ are special even KY forms and the odd degree bilinears $(\psi^+\overline{\psi^+})_3$ and $(\psi^-\overline{\psi^-})_3$ are special odd CCKY forms. Moreover, $(\psi^+\overline{\psi^+})_6$ and $(\psi^-\overline{\psi^-})_6$ correspond to the volume form $z$ and we have the identities
\begin{eqnarray}
\delta(\psi\overline{\psi})_2&=&0\\
d(\psi\overline{\psi})_2&=&6\lambda(\psi\overline{\psi})_3\\
\delta(\psi\overline{\psi})_3&=&-8\lambda(\psi\overline{\psi})_2\\
d(\psi\overline{\psi})_3&=&0
\end{eqnarray}
for both $\psi^+$ and $\psi^-$.

From Theorem 6, the bracket of the even part of the superalgebra is $*[\,,\,]_{CKY}$ and the only non-zero bracket in the even part is
\begin{equation}
*[(\psi^+\overline{\psi^+})_2,(\psi^-\overline{\psi^-})_2]_{CKY}=a(\psi^+\overline{\psi^+})_3+b(\psi^-\overline{\psi^-})_3
\end{equation}
where $a$ and $b$ are constants.

From the following algebraic relations
\begin{eqnarray}
(\psi^+\overline{\psi^+})_2.\psi^++(\psi^+\overline{\psi^+})_3.\psi^++(\psi^+\overline{\psi^+})_6.\psi^+=0\nonumber\\
(\psi^+\overline{\psi^+})_2.\psi^-+(\psi^+\overline{\psi^+})_3.\psi^-+(\psi^+\overline{\psi^+})_6.\psi^-=0\nonumber
\end{eqnarray}
and the same formulas for the interchange of $\psi^+$ and $\psi^-$, we can find the symmetry operators given in (98) as follows
\begin{eqnarray}
K_{(\psi^+\overline{\psi^+})_2}\psi^+&=&-2\lambda\psi^+\nonumber\\
K_{(\psi^+\overline{\psi^+})_2}\psi^-&=&-2\lambda\psi^-\nonumber\\
K_{(\psi^+\overline{\psi^+})_3}\psi^+&=&3\lambda\psi^+\nonumber\\
K_{(\psi^+\overline{\psi^+})_3}\psi^-&=&-2\lambda\psi^-\nonumber\\
K_{(\psi^+\overline{\psi^+})_6}\psi^+&=&-6\lambda\psi^+\nonumber\\
K_{(\psi^+\overline{\psi^+})_6}\psi^-&=&-6\lambda\psi^-
\end{eqnarray}
and the same equalities are true for the symmetry operators constructed from $\psi^-$.

To see that whether it is also a Lie superalgebra, we need to check the Jacobi identities. The first Jacobi identity is the the Jacobi identity for the $*[\,,\,]_{CKY}$ bracket which is automatically satisfied. The second Jacobi identity is the vanishing of the commutators of symmetry operators given by
\begin{equation}
[K_{(\psi\overline{\psi})_i},K_{(\psi\overline{\psi})_j}]=0
\end{equation}
for $\psi^+$ and $\psi^-$ and it can be seen from (152) that it is satisfied for all cases of $i,j=2,3,6$. However, the third and fourth Jacobi identities are not satisfied for all cases. So, the generalized symmetry superalgebra of a nearly K\"{a}hler manifold is a superalgebra but not a Lie superalgebra.

\section{Conclusion}

We generalize the symmetry superalgebras that correspond to geometric invariants of manifolds with isometries to include all the hidden symmetries of the manifold generated by geometric Killing spinors. This defines a more complete construction of the superalgebra structure of symmetries of the manifold. This also gives way to construct generalizations of the Lie derivative on spinor fields as symmetry operators of geometric Killing spinors and the construction of the Lie algebra structure of special KY and special CCKY forms.

Besides the symmetry superalgebras of isometries, one can also construct conformal superalgebras from conformal symmetries and twistor spinors \cite{de Medeiros Hollands,Ertem5}. Twistor spinors correspond to supersymmetry generators of superconformal field theories and the construction of conformal superalgebras are related to the classification of the superconformal backgrounds in these theories. The methods described in the paper can also be used to obtain generalized conformal superalgebras and generalized gauged conformal superalgebras whose supersymmetry generators correspond to gauged twistor spinors \cite{Ertem6,Ertem7}. On the other hand, Killing superalgebras of supergravity backgrounds in non-constant curvature backgrounds can also lead to more general geometric invariants by investigating the generalizations with using the similar methods described in the paper and in \cite{Acik Ertem2}. This can give new perspectives to the classification problem of supergravity backgrounds in all dimensions and supergravity theories.



\begin{references}

\bibitem{Lichnerowicz1} A. Lichnerowicz, "Killing spinors, twistor spinors and Hijazi inequality," J. Geom. Phys. 
\textbf{5}, 1 (1988).

\bibitem{Alekseevsky Cortes1} D. V. Alekseevsky and V. Cortes, "On pseudo-Riemannian manifolds with many Killing spinors," AIP Conf. Proc. \textbf{1093}, 3 (2009).

\bibitem{Acik} \"{O}. A\c{c}{\i}k, "Field equations from Killing spinors" J. Math. Phys. \textbf{59}, 023501 (2018).

\bibitem{Klinker} F. Klinker, "Supersymmetric Killing structures," Commun. Math. Phys. \textbf{255}, 419 (2005).

\bibitem{OFarrill} J. Figueroa-O`Farrill, "On the supersymmetries of anti-de Sitter vacua," Class. Quantum Grav. \textbf{16}, 2043 (1999).

\bibitem{OFarrill HackettJones Moutsopoulos Simon} J. Figueroa-O`Farrill, E. Hackett-Jones, G. Moutsopoulos and J. Simon, "On the maximal superalgebras of supersymmetric backgrounds," Class. Quantum Grav. \textbf{26}, 035016 (2009).

\bibitem{OFarrill Santi} J. Figueroa-O`Farrill and A. Santi, "On the algebraic structure of Killing superalgebras," Adv. Theor. Math. Phys. \textbf{21}, 1115 (2017).

\bibitem{deMedeiros OFarrill Santi} P. de Medeiros, J. Figueroa-O`Farrill and A. Santi, "Killing superalgebras for Lorentzian four-manifolds," J. High Energy Phys. \textbf{06}, 106 (2016).

\bibitem{OFarrill Meessen Philip} J. Figueroa-O`Farrill, P. Meessen and S. Philip, "Supersymmetry and homogeneity of M-theory backgrounds," Class. Quantum Grav. \textbf{22}, 207 (2005).

\bibitem{OFarrill HackettJones Moutsopoulos} J. Figueroa-O`Farrill, E. Hackett-Jones and G. Moutsopoulos, "The Killing superalgebra of 10-dimensional supergravity backgrounds," Class. Quantum Grav. \textbf{24}, 3291 (2007).

\bibitem{OFarrill Hustler1} J. Figueroa-O`Farrill and N. Hustler, "The homogeneity theorem for supergravity backgrounds," J. High Energy Phys. \textbf{10}, 014 (2012).

\bibitem{OFarrill Hustler2} J. Figueroa-O`Farrill and N. Hustler, "The homogeneity theorem for supergravity backgrounds II: the six dimensional theories," J. High Energy Phys. \textbf{04}, 131 (2014).

\bibitem{OFarrill Santi1} J. Figueroa-O'Farrill and A. Santi, "Spencer cohomology and 11-dimensional supergravity," Commun. Math. Phys. \textbf{349}, 627 (2017).

\bibitem{OFarrill Santi2} J. Figueroa-O'Farrill and A. Santi, "Eleven-dimensional supergravity from filtered subdeformations of the Poincare superalgebra," J. Phys. A: Math. Theor. \textbf{49}, 295204 (2016).

\bibitem{Acik Ertem1} \"{O}. A\c{c}{\i}k and \"{U}. Ertem, "Higher-degree Dirac currents of twistor and Killing spinors in supergravity theories," Class. Quantum Grav. \textbf{32}, 175007 (2015).

\bibitem{Alekseevsky Cortes Devchand Proeyen} D. V. Alekseevsky, V. Cortes, C. Devchand and A. Van Proeyen, "Polyvector super-Poincare algebras," Commun. Math. Phys. \textbf{253}, 385 (2005).

\bibitem{Alekseevsky Cortes2} D. V. Alekseevsky and V. Cortes, "Classification of N-(super)-extended Poincaré algebras and bilinear invariants of the spinor representation of Spin (p,q)," Commun. Math. Phys. \textbf{183}, 477 (1997).

\bibitem{Kastor Ray Traschen} D. Kastor, S. Ray and J. Traschen, "Do Killing-Yano tensors form a Lie algebra?," Class. Quantum Grav. \textbf{24}, 3759 (2007).

\bibitem{Ertem1} \"{U}. Ertem, "Symmetry operators of Killing spinors and superalgebras in $AdS_5$," J. Math. Phys. \textbf{57}, 042502 (2016).

\bibitem{Ertem2} \"{U}. Ertem, "Extended superalgebras from twistor and Killing spinors," Diff. Geom. Appl. \textbf{54}, 236 (2017).

\bibitem{Ertem3} \"{U}. Ertem, "Lie algebra of conformal Killing-Yano forms," Class. Quantum Grav. \textbf{33}, 125033 (2016).

\bibitem{Benn Tucker} I. M. Benn and R. W. Tucker, \textit{An Introduction to Spinors and Geometry with Applications in
Physics} (IOP Publishing Ltd, Bristol, 1987).

\bibitem{Charlton} P. Charlton, \textit{The Geometry of Pure Spinors with Applications} PhD Thesis, University of Newcastle (1997).

\bibitem{Lichnerowicz2} A. Lichnerowicz, "On the twistor spinors," Lett. Math. Phys. \textbf{18}, 333 (1989).

\bibitem{Habermann} K. Habermann, "The twistor equation on Riemannian manifolds," J. Geom. Phys. \textbf{7}, 469 (1990).

\bibitem{Baum Leitner} H. Baum and F. Leitner, "The twistor equation in Lorentzian spin geometry," Math. Z. \textbf{247}, 795 (2004).

\bibitem{Baum Friedrich Grunewald Kath} H. Baum, T. Friedrich, R. Grunewald and I. Kath, \textit{Twistors and Killing Spinors on Riemannian Manifolds} (Teubner, Stuttgart/Leipzig, 1991).

\bibitem{Bourguignon et al} J. P. Bourguignon, O. Hijazi, J. L. Milhorat, A. Moroianu and S. Moroianu, \textit{A Spinorial Approach to Riemannian and Conformal Geometry} (European Mathematical Society, Z\"{u}rich, 2015).

\bibitem{Semmelmann} U. Semmelmann, "Conformal Killing forms on Riemannian manifolds," Math. Z. \textbf{245}, 503 (2003).

\bibitem{Ertem4} \"{U}. Ertem, "Spin geometry and some applications," e-print arXiv:1801.06988 [math-ph] (2018).

\bibitem{Acik Ertem Onder Vercin} \"{O}. A\c{c}{\i}k, \"{U}. Ertem, M. \"{O}nder and A. Ver\c{c}in, "Basic gravitational currents and Killing-Yano forms," Gen. Relativ. Gravit. \textbf{42}, 2543 (2010).

\bibitem{Kosmann} Y. Kosmann, "Derivees de Lie des spineurs," Annal. Math. Pura ed Appl. \textbf{91}, 317 (1972).

\bibitem{Ertem5} \"{U}. Ertem, "Twistor spinors and extended conformal superalgebras," J. Geom. Phys. \textbf{152}, 103654 (2020).

\bibitem{Joyce} D. Joyce, \textit{Compact Manifolds with Special Holonomy} (Oxford University Press, Oxford, 2000).

\bibitem{Besse} A. L. Besse, \textit{Einstein Manifolds} (Springer-Verlag, New York, 1987).

\bibitem{Friedrich Kath Moroianu Semmelmann} T. Friedrich, I. Kath, A. Moroianu and U. Semmelmann, "On nearly parallel $G_2$-structures," J. Geom. Phys. \textbf{23}, 259 (1997).

\bibitem{de Medeiros Hollands} P. de Medeiros and S. Hollands, "Conformal symmetry superalgebras," Class. Quantum Grav. \textbf{30}, 175016 (2013).

\bibitem{Ertem6} \"{U}. Ertem, "Gauged twistor spinors and symmetry operators," J. Math. Phys. \textbf{58}, 032302 (2017).

\bibitem{Ertem7} \"{U}. Ertem, "Harmonic spinors from twistors and potential forms," J. Math. Phys. \textbf{59}, 112301 (2018). 

\bibitem{Acik Ertem2} \"{O}. A\c{c}{\i}k and \"{U}. Ertem, "Hidden symmetries and Lie algebra structures from geometric and supergravity Killing spinors," Class. Quantum Grav. \textbf{33}, 165002 (2016).

\end{references}
\end{document}